\documentclass[11pt]{article}  
\usepackage{fullpage} 

\usepackage{amsthm}
\usepackage{amsmath}
\usepackage{amssymb}
\usepackage{todonotes}
\usepackage{mathtools}
\usepackage{xspace}
\usepackage{xcolor}
\usepackage{verbatim}
\usepackage{enumitem}
\usepackage{multirow}
\usepackage{nicefrac}
\usepackage{hyperref}
\usepackage{cite}
\usepackage{algorithm}
\usepackage{algpseudocode}

\usepackage{tikz}
\usetikzlibrary{positioning,decorations.pathreplacing}

\newcommand{\outdeg}{\operatorname{outdeg}}
\newcommand{\free}{\mathsf{free}}
\newcommand{\dem}{\mathsf{demand}}

\newcommand{\ind}{\textsc{Index}\xspace}

\newtheorem{theorem}{Theorem}
\newtheorem{lemma}{Lemma}

\theoremstyle{definition}
\newtheorem{definition}{Definition}

\title{One Color Makes All the Difference in  the Tractability of Partial Coloring in  Semi-Streaming}

\author{
Avinandan Das\thanks{Email: {\tt avinandan.das@aalto.fi} }  \\
{\small Department of Computer Science}\\
{\small Aalto University} \\
{\small Finland}
}

\date{}

\begin{document}
\maketitle

\begin{abstract}



This paper investigates the semi-streaming complexity of \textit{$k$-partial coloring}, a generalization of proper graph coloring. For $k \geq 1$, a $k$-partial coloring requires that each vertex $v$ in an $n$-node graph is assigned a color such that at least $\min\{k, \deg(v)\}$ of its neighbors are assigned colors different from its own. This framework naturally extends classical coloring problems: specifically, $k$-partial $(k+1)$-coloring and $k$-partial $k$-coloring generalize $(\Delta+1)$-proper coloring and $\Delta$-proper coloring, respectively.

	Prior works of Assadi, Chen, and Khanna [SODA~2019] and Assadi, Kumar, and Mittal [TheoretiCS~2023] show that both $(\Delta+1)$-proper coloring and $\Delta$-proper coloring admit one-pass randomized semi-streaming algorithms. We explore whether these efficiency gains extend to their partial coloring generalizations and reveal a sharp computational threshold : while $k$-partial $(k+1)$-coloring admits a one-pass randomized semi-streaming algorithm, the $k$-partial $k$-coloring  remains semi-streaming intractable, effectively demonstrating a ``dichotomy of one color'' in the streaming model.


\end{abstract}
\thispagestyle{empty}

\newpage
\setcounter{page}{1}
\section{Introduction}

\label{sec:intro}
\paragraph{Partial coloring.}
 A proper vertex coloring requires that every edge is bichromatic, i.e., every vertex $v$ has \emph{all}
its neighbors colored differently from $v$.
Partial coloring relaxes this requirement by only demanding \emph{enough} differently-colored neighbors.
Formally, for integers $k \ge 0$ and $c \ge 1$, a \emph{$k$-partial $c$-coloring} of a graph $G=(V,E)$ is a map
$\chi:V \to [c]$ such that for every vertex $v$,
\[
\bigl|\{u \in N_G(v) : \chi(u) \neq \chi(v)\}\bigr|
\;\ge\;
\min\{k,\deg_G(v)\}.
\]
Equivalently, if $\deg_G(v)\ge k$ then $v$ must have at least $k$ neighbors of a different color, whereas
if $\deg_G(v)<k$ then all neighbors of $v$ must have a different color. Specifically, $\Delta$-partial coloring implies a proper coloring of $G$ where $\Delta$ is the max. degree of $G$.

The most prominent parameter choices are $k$-partial $(k+1)$-coloring and $k$-partial $k$-coloring.
These notions generalize the classical degree-based benchmarks:
when $k=\Delta$, a $\Delta$-partial $(\Delta+1)$-coloring is exactly a proper $(\Delta+1)$-coloring,
and a $\Delta$-partial $\Delta$-coloring is exactly a proper $\Delta$-coloring.

A simple greedy argument in fact shows that a $k$-partial $(k+1)$-coloring always exists. Initialize all vertices to color~1 and consider each vertex sequentially. For each considered vertex $v$, let $C(v)$ be the set of colors present in the neighborhood of $v$. If $|C(v)|= k+1$, then the vertex retains it's color. Otherwise, it recolors itself with a color from the set $[k+1]\smallsetminus C(v)$. The resulting coloring is $k$-partial  because (1)~node~$v$ has at least $\min\{k,\deg(v)\}$ neighbors with a color different from its own color when it adopts its final color, and (2)~two neighboring vertices with different colors at some time~$t$ during the execution of the greedy algorithm will remain with different colors at any time $t'\geq t$.  

A central question in this paper is whether the semi-streaming tractability known for $(\Delta+1)$- and $\Delta$-proper coloring~\cite{ACK19,AKM23} extends to these partial-coloring generalizations as well.

\paragraph{A simple Streaming Algorithm.}

As a warm-up, a simple one-pass approach already yields a $k$-partial $(k+o(k))$-coloring.

A useful lens is to view partial coloring through a \emph{witness} subgraph. Suppose we find a spanning subgraph $G' \subseteq G$ such that every vertex $v$ retains at least $\min\{k,\deg_G(v)\}$ incident edges in $G'$. Then any \emph{proper} coloring of $G'$ with $c$ colors is automatically a $k$-partial $c$-coloring of $G$: each vertex has enough neighbors inside $G'$, and properness makes all witness edges bichromatic.

We run the \emph{filter} from Section~\ref{subsec:algorithm} to construct online a spanning subgraph $G_f=(V,E_f)$:
when an edge $\{u,v\}$ arrives, we accept it into $E_f$ iff the current degree of $u$ in $G_f$ is $<k$
or the current degree of $v$ in $G_f$ is $<k$. Lemma~\ref{lemma:witness-implies-partial} in Section~\ref{subsec:correctness} proves that $G_f$ is a valid witness, and by Lemma~\ref{lemma:degenerate}, that it  is $k$-degenerate.
Hence we may apply the degeneracy-based semi-streaming coloring algorithm~\cite{BCG20} to properly color $G_f$ with $k+o(k)$ many colors in semi-streaming.



\paragraph{\boldmath Challenges in getting down to $k+1$ colors.}

 The witness perspective also clarifies why achieving $k+1$ colors requires new ideas.

\begin{description}
	\item[1. Selecting the right streamed edges.] In semi-streaming we must decide online which edges to keep in the witness $G'$ under $\widetilde{O}(n)$ space.
The greedy filter above yields only a $k$-degenerate witness.
However, properly coloring an \emph{arbitrary} $k$-degenerate graph with $k+1$ colors is not feasible in one-pass semi-streaming space (refer to~\cite{BCG20}).
Thus, the first challenge is to maintain a witness that is not merely sparse, but structured enough that $(k+1)$-coloring becomes tractable.

\item[2.  Even with a good witness, what coloring approach applies?] Assume we somehow manage to keep a ``nice'' witness $G'$.
It is still not immediate what algorithmic approach should be deployed to color it with $k+1$ colors in one pass.
At first glance one might hope that the witness has $\Delta$ bounded by $k$ and hence that one can adapt
$(\Delta+1)$-coloring ideas.
However, this intuition can fail sharply: there exist graphs for which any $k$-partial $(k+1)$-coloring
essentially forces a proper coloring, while the graph still contains vertices of degree strictly larger than $k$;
see Fig.~\ref{fig} for such examples.
Therefore, one cannot rely on a maximum-degree bound, and the coloring algorithm must exploit a different
structural handle than naive $\Delta \le k$ reasoning.
\end{description}

\paragraph{\boldmath $k$-partial $k$-coloring.}

A tight classical characterization of when a graph admits a proper $\Delta$-coloring is given by Brooks' theorem (and its many variants and proofs); in particular, for a connected graph with maximum degree $\Delta$, a proper $\Delta$-coloring exists unless the graph is a clique (or an odd cycle when $\Delta=2$); see, e.g., the survey~\cite{CR15}. 
More recently, it was proven in~\cite{BDGJ25} that for every fixed constant $k\ge 3$, deciding whether a graph admits a $k$-partial $k$-coloring is NP-complete; their reduction replaces each edge of the input graph by a constant-size ``$k$-edge gadget'' (see Section~\ref{sec:lower-bound} and Fig.~\ref{fig}). 
At the same time, the problem of $k$-partial $k$-coloring for any constant $k$ is trivially semi-streaming tractable; implement the simple streaming algorithm discussed in Section~\ref{sec:intro} and store the filtered graph $G_f$ (which has degeneracy bounded by $k$ which is a constant and therefore can be stored offline in $\widetilde{O}(n)$ space) and find a coloring offline.  
This phenomenon is not unique: there are NP-hard problems that are nonetheless trivial  in semi-streaming space, e.g. deciding whether a cubic graph is $3$-edge-colorable is NP-complete~\cite{H81}, cubic graphs have only $3n/2$ edges and hence can be stored exactly in semi-streaming space.
Consequently, the question remains whether the one-pass semi-streaming tractability known for proper $\Delta$-coloring~\cite{AKM23} extends to $k$-partial $k$-coloring as well.

\section{Related Work}
\label{sec:related-work}

Graph problems in streaming are commonly studied in the \emph{semi-streaming} model introduced in~\cite{FKMSZ05}, where the graph arrives as an edge stream and the algorithm uses
$\tilde{O}(n)$ space. For background on techniques and representative results in graph streams, see survey~\cite{M14}.

A central tool behind modern sublinear graph-coloring algorithms is the \emph{palette sparsification theorem}~\cite{ACK19}, which shows that sampling $O(\log n)$ colors per vertex from $\{1,\dots,\Delta+1\}$
typically preserves the existence of a proper $(\Delta+1)$-coloring, enabling single-pass semi-streaming
algorithms in dynamic streams. Subsequent work studied palette sparsification in special classes of graphs, like in triangle free graphs with a smaller palette of colors as well as generalizing it to $(\deg+1)$-coloring~\cite{AA20}. It was further generalized to $(\deg+1)$-list coloring in~\cite{HKNT22}. Recently a substantially simpler (albeit slightly weaker) proof of palette sparsification has appeared in~\cite{AY25}.

Proper $\Delta$-coloring was also studied in the streaming framework and was proven to be semi-streaming tractable in~\cite{AKM23}. Coloring was also studied w.r.t \emph{degeneracy} of a graph. In~\cite{BCG20}, a semi-streaming algorithm was designed which colors a graph with degeneracy $\kappa$ in $\kappa+o(\kappa)$ many colors. They also proved that $\kappa+1$ coloring of a $n$-vertex graph requires $\Omega(n^2)$ bits of space. We adapt their lower bound proof to the partial coloring setting in Section~\ref{sec:lower-bound}.

The notion of \emph{partial} (or \emph{$k$-partially proper}) coloring has appeared previously in distributed
graph algorithms before~\cite{NS93, BHLOS19, DFR23, BDGJ25}. Specifically,~\cite{BDGJ25} studied the problem of $k$-partial $k$-coloring and proved that the problem is NP-complete and proved that the problem is``global" for every constant $k$.  

Our work complements these results by establishing an analogous \emph{one-color} threshold phenomenon in the
semi-streaming model.

\section{Our Results} 

The results of this paper demonstrate  a sharp \emph{one-color dichotomy} in the semi-streaming setting: allowing \emph{$k{+}1$} colors makes $k$-partial coloring tractable in semi-streaming space, whereas
restricting to exactly \emph{$k$} colors can force super-linear memory.

\begin{theorem}
There exists a one-pass randomized semi-streaming algorithm that, with high probability, produces a $k$-partial coloring of an input graph $G$ using $k+1$ colors,  
where $k \in \mathbb{N}$ is given as input prior to the stream and $G$ is presented as an insertion-only stream.
\end{theorem}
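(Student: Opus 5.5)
We plan to combine the witness filter from the introduction with palette sparsification in the style of~\cite{ACK19}. The key is to find a witness on which $(k+1)$-coloring reduces to a list-coloring problem with ``$\deg+1$''-type slack.

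\textbf{Witness with enough slack.} We run the online filter: accept $\{u,v\}$ iff $\deg_{G_f}(u)<k$ or $\deg_{G_f}(v)<k$. By Lemma~\ref{lemma:witness-implies-partial}, any proper coloring of $G_f$ is a $k$-partial coloring of $G$. We then orient each accepted edge toward the endpoint that was still ``deficient'' (degree $<k$) at arrival; if both were deficient, we orient arbitrarily. Every vertex has out-degree at most $k$ in this orientation.

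Properly coloring $G_f$ is too strong, since vertices may have high in-degree. We therefore relax the requirement: vertex $v$ need only differ from the $\min\{k,\deg_G(v)\}$ neighbors charged to it. Concretely, let $W(v)$ be the set of edges accepted while $v$ was deficient, so that $|W(v)|=\min\{k,\deg_G(v)\}$. It suffices that each $v$ differs in color from all endpoints of $W(v)$. Each vertex then carries at most $k$ constraints ``I must differ from $u$'', with $k+1$ colors available. This is exactly a $(\deg+1)$-style list-coloring instance on the directed constraint graph $H$, where $v$'s constraints are its out-edges. The catch is that an edge in $W(u)\cap W(v)$ constrains both endpoints; an edge lying in only one set constrains only that side, but properness on it can still be demanded symmetrically. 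We would therefore simply demand properness on all of $H=G_f$, with the list of $v$ being $[k+1]$.

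\textbf{Palette sparsification.} Since $G_f$ may have vertices of degree far above $k$ (cf.\ Fig.~\ref{fig}), we cannot invoke~\cite{ACK19} with $\Delta\le k$ directly. Instead we use the asymmetric structure. Each vertex samples $L(v)\subseteq[k+1]$ of size $O(\log n)$ before the stream. During the stream we store an accepted edge $(u,v)$ only if $L(u)\cap L(v)\neq\emptyset$; in expectation this keeps $O(\log^2 n/k)$ fraction of edges. As $|E_f|\le nk$, storage is $\widetilde O(n)$. For $k=O(\log^2 n)$ we just store $G_f$ entirely.

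Offline, we color greedily in an order guaranteeing each vertex only needs to avoid its $\le k$ out-neighbors. Processing vertices in reverse arrival-completion order of the orientation, so that when $v$ is colored only its witness neighbors matter, requires the orientation to be acyclic. Alternatively, we prove a list-coloring statement for ``out-degree $\le k$ in an orientation, lists from $[k+1]$'': the sampled lists admit a coloring in which each $v$ differs from its out-neighbors.

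\textbf{Main obstacle.} The hard step is the palette-sparsification theorem for this relaxed, orientation-based constraint system. The easy direction, each $v$ avoiding $\le k$ out-neighbors, is a $(\deg+1)$-list coloring in the sense of~\cite{HKNT22,AA20}. However, a vertex's in-neighbors also need to avoid it, so the constraint graph is effectively undirected with high degrees. We would attempt to apply the $(\deg+1)$ palette sparsification of~\cite{HKNT22} to a subgraph $G'\subseteq G_f$ in which every vertex has degree $\le k$ while keeping the witness property. Failing that, we would adapt the sparse--dense decomposition proof of~\cite{ACK19}, treating almost-cliques of size about $k+1$ specially: such structures are exactly those forcing near-proper colorings as in Fig.~\ref{fig}.
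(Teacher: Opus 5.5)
Your filter, your rule of storing only edges with intersecting lists, and your space bound match the paper. The coloring target, however, cannot work. The $W(v)$ system is not a relaxation: every accepted edge was accepted because one endpoint was still deficient, so $\bigcup_v W(v)=E(G_f)$. Since ``$v$ differs from $u$'' is symmetric, asking each $v$ to differ from the other endpoints of $W(v)$ is exactly asking for a proper coloring of $G_f$. The same holds for ``each $v$ differs from its out-neighbours''; for a non-acyclic orientation this fails even with full lists, e.g.\ a regular tournament on $K_{2k+1}$ has all out-degrees equal to $k$.

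Moreover, $G_f$ can be an \emph{arbitrary} $k$-degenerate graph. Stream a degeneracy order backwards: for $i=n,\dots,1$, give the at most $k$ edges from $v_i$ to later vertices. Vertex $v_i$ has degree $0$ when its edges begin, so all of them pass the filter. Hence the sparsification statement you want is a palette-sparsification theorem for proper $(k+1)$-coloring of all $k$-degenerate graphs. Storing only edges with intersecting lists ($O(ns^2)$ of them, via a degeneracy orientation), it would give a one-pass semi-streaming $(\kappa+1)$-coloring of $\kappa$-degenerate graphs, which \cite{BCG20} rules out, as the introduction notes. So any offline step that demands properness on all stored edges of $G_f$ is doomed.

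Your fallbacks do not close this. A subgraph of $G_f$ with maximum degree $\le k$ that keeps $\deg(v)\ge\min\{k,\deg_G(v)\}$ need not exist: in a star $K_{1,m}$ with $m>k\ge 2$ every leaf needs its only edge (cf.\ Fig.~\ref{fig}). The suggested adaptation of the sparse--dense decomposition of \cite{ACK19} is left unspecified.

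The paper's proof uses two ideas your plan lacks.

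\textbf{Existence via a structured witness.} It passes from $G_f$ to a witness $W$ by repeatedly deleting edges both of whose endpoints have degree $>k$. Then $S=\{v:\deg_W(v)>k\}$ is an independent set, and every vertex of $R=V\setminus S$ has degree at most $k$. Each list is split into $L_1(v)$ of size $C_1\log n$ and $L_2(v)$ of size $C_2\log^2 n$. Fixing any $T\subseteq[k+1]$ with $|T|=k/4$, w.h.p.\ every $v\in S$ has $L_1(v)\cap T\neq\emptyset$, so $S$ is colored from $T$ without conflicts. Each $u\in R$ is then left with a palette $U_u$ satisfying $|U_u|\ge\deg_R(u)+1$ and $|U_u|\ge\frac34(k+1)$. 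So $L_2(u)\cap U_u$ is a uniform sample from $U_u$ of size $\Omega(\log^2 n)$, and the $(\deg+1)$-list palette sparsification of \cite{HKNT22} finishes $R$. This is also why lists of size $\Theta(\log^2 n)$, not $O(\log n)$, are needed.

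\textbf{A demand-based offline step.} The algorithm cannot compute $W$ online, and a proper list coloring of $W$ may leave other stored edges monochromatic. So the offline step does not ask for properness. It solves the demand-partial list coloring instance $(H,\dem)$ with $\dem(v)=\max\{0,\min\{k,\deg(v)\}-\free(v)\}$, where $\free(v)$ counts filtered edges at $v$ with disjoint lists. The list coloring of $W$ meets these demands, and any solution is $k$-partial on $G_f$, hence on $G$. This keeps exactly the freedom your formulation discards: not fixing in advance which $\min\{k,\deg(v)\}$ edges at $v$ must be bichromatic.
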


Section~\ref{sec:semi-streaming-algorithm} is dedicated to proving the theorem. In section~\ref{subsec:palette-sparsification}, we  begin by identifying a sparse spanning subgraph $G'\subseteq G$ that serves as a \emph{witness} for partial coloring: a proper $(k+1)$-coloring of $G'$ immediately induces a $k$-partial $(k+1)$-coloring of $G$. We then  prove a palette sparsification theorem for proper $(k+1)$-coloring of the witness graph $G'$ and finally in Section~\ref{subsec:algorithm}, we adapt the sparsification lemma to obtain a one-pass semi-streaming algorithm for the coloring.

In Section~\ref{sec:lower-bound}, we augment this upper bound by a fine-grained  super linear  lower bound for $k$-partial $k$-coloring. More specifically, we prove the following theorem.

\begin{theorem}
    For every constant $\varepsilon \geq 0$, let $G$ be an $n$-vertex graph presented as an adversarial insertion-only stream, and let the partial coloring parameter be $k = \Theta(n^{\frac{1}{3+\varepsilon}})$.
    Any one-pass randomized streaming algorithm $\mathcal{A}$ that solves $k$-partial $k$-coloring on $G$ with probability at least $2/3$ requires $\Omega\left(n^{1 + \frac{1}{3+\varepsilon}}\right)$
    bits of memory. In particular, for $\varepsilon=0$, the algorithm requires $\Omega(n^{4/3})$ bits.
\end{theorem}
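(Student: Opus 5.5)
The plan is to reduce from the one-way communication problem \ind, following the lower bound of~\cite{BCG20} for $(\kappa+1)$-coloring $\kappa$-degenerate graphs, with the $k$-edge gadget of~\cite{BDGJ25} in place of plain edges. In \ind, Alice holds a string $x\in\{0,1\}^N$ and Bob holds an index $i\in[N]$. Any one-way protocol in which Bob outputs $x_i$ with probability at least $2/3$ needs $\Omega(N)$ bits. We will take $N=\Theta(n^{1+\frac{1}{3+\varepsilon}})$.

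\textbf{Gadget properties.} First we isolate what the gadget must satisfy. Let $H_k(a,b)$ be a constant-structure gadget with two terminals $a,b$ and $O(k^2)$ internal vertices, for instance built from near-cliques of size about $k$. We need two facts.
\begin{itemize}[leftmargin=*]
\item \emph{Forcing.} In every $k$-partial $k$-coloring of any graph containing $H_k(a,b)$ as a pendant piece, the terminals satisfy $\chi(a)\neq\chi(b)$.
\item \emph{Flexibility.} The gadget can be completed for any prescribed pair of distinct terminal colors. Moreover, the internal vertices never impose constraints on anything outside the gadget, because each internal vertex already has at least $k$ bichromatic neighbors inside the gadget.
\end{itemize}
These are exactly the properties that make the NP-hardness reduction of~\cite{BDGJ25} work. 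I would verify them by a direct case analysis, using that a clique on $k+1$ vertices with $k$ colors forces exactly one monochromatic pair, which in turn pins down colors.

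\textbf{Construction.} Partition the vertices into $t=\Theta(n/k^{c})$ blocks, sized so that each encoded bit can be charged to one gadget slot and the total vertex count stays $n$. The parameter is tuned so that $k=\Theta(n^{1/(3+\varepsilon)})$ gives $N=\Theta(n\cdot k)$ bits.
\begin{itemize}[leftmargin=*]
\item \emph{Alice's edges.} Alice encodes $x$ as follows. For each potential pair $(u,w)$ in a designated family of $N$ terminal pairs, she inserts a gadget $H_k(u,w)$ if and only if the corresponding bit $x_j=1$. Otherwise she inserts a ``neutral'' filler that keeps degrees and structure identical but imposes no constraint.
\item \emph{Bob's edges.} Bob, knowing $i=(u,w)$, appends edges that attach $u$ and $w$ to a common set $S$ of $k-1$ vertices that he forces to use $k-1$ distinct colors. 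For example, $S$ can be a small clique whose degrees leave no slack, so $u$ and $w$ are each left with exactly one admissible color, and that color is the same for both. He also adds an edge $uw$ (or a gadget between them) that forbids $\chi(u)=\chi(w)$ exactly in the presence of Alice's gadget.
\end{itemize}

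\textbf{Decoding.} Bob reads the algorithm's coloring and infers $x_i$. If $x_i=1$, the instance is designed to be infeasible, or to force $\chi(u)\neq\chi(w)$. If $x_i=0$, the coloring forces $\chi(u)=\chi(w)$. Simulating $\mathcal{A}$ with Alice's memory state as the message then gives a protocol for \ind, so the memory is $\Omega(N)=\Omega(n^{1+\frac{1}{3+\varepsilon}})$.

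\textbf{Main obstacle.} The crux is the accounting together with the decoding robustness. The gadget costs $\Theta(k^2)$ vertices per bit in a naive encoding, which would give only $\Omega(n)$. To reach $N\gg n$, the gadget interiors must be shared across many terminal pairs. The natural route is to use $\Theta(k^3)$ shared internal structure supporting $\Theta(k^{4})$ terminal pairs, which yields $N=n\cdot k$ when $n\approx k^3$. We must then prove that sharing does not break the forcing and flexibility properties, i.e.\ that the surrounding constraints on each vertex are still met whatever subset of bits is set. This is where the exponent $3+\varepsilon$ comes from, with slack $\varepsilon$ absorbing the block replication needed to reach general $n$. I would first try a bipartite-style arrangement in which each terminal gets degree exactly $k$ from Alice's potential edges, so that inserting or omitting one gadget changes feasibility locally. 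Only if that fails would I fall back to gadget sharing via a design-like family of pairs.
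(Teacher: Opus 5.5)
Your proposal has a genuine gap, located exactly where you put the ``main obstacle''.

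\textbf{Vertex budget.} As written, each encoded bit costs one gadget, i.e.\ at least $k$ fresh internal vertices. So $n$ vertices can carry only $O(n/k)$ bits, far below the target $N=\Theta(nk)$. You suggest sharing gadget interiors across many terminal pairs, but you do not construct this, and there is a structural reason to doubt it. The edge-gadget forces $\chi(a)\neq\chi(b)$ only because its internal vertices have degree at most $k$, so that $k$-partiality means properness there. An interior vertex shared by many gadgets has degree larger than $k$ and needs only $k$ bichromatic neighbors, so the forcing argument no longer goes through.

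\textbf{Decoding.} Bob's common set $S$ forces $\chi(u)=\chi(w)$, while Alice's gadget (present iff $x_i=1$) forces $\chi(u)\neq\chi(w)$. So the $x_i=1$ instance is not $k$-partial $k$-colorable. A coloring algorithm is only obliged to be correct on colorable inputs, so there it may output anything, e.g.\ a coloring with $\chi(u)=\chi(w)$, and Bob cannot tell the two cases apart. Also, Bob cannot add an edge that acts ``exactly in the presence of Alice's gadget'', since he does not know $x$.

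\textbf{Missing idea.} Carry each bit by a single plain edge at a vertex of degree below $k$, and spend gadgets only on $O(k)$ hub vertices. The paper does this as follows.
\begin{itemize}
\item It writes $X$ as a $(k-1)\times\ell$ matrix $A$, with hubs $u_1,\dots,u_{k-1}$, $v_1,\dots,v_{k-1}$ and column vertices $w_1,\dots,w_\ell$.
\item For every $(i,h)$, Alice inserts exactly one edge: $\{u_i,w_h\}$ if $A[i,h]=1$, and $\{v_i,w_h\}$ if $A[i,h]=0$. Hence $\deg(w_h)=k-1$ regardless of $A$, and $\chi$ must be proper at $w_h$.
\item Alice also inserts edge-gadgets between all pairs within $U$ and within $V$, which makes each of them rainbow.
\item Bob inserts color-repeaters $c_{u_t,v_t}$ for $t\neq g$ and one edge-gadget $e_{u_g,v_g}$. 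This forces $\chi(v_g)=\alpha$, the color missing from $U$.
\end{itemize}
Then $N_G(w_h)$ always sees exactly $k-1$ distinct colors. So $\chi(w_h)=\alpha=\chi(v_g)$ if $A[g,h]=1$, and $\chi(w_h)=\chi(u_g)$ otherwise. Both instances are colorable, and Bob reads the bit off directly.

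\textbf{Where the exponent comes from.} The overhead is $\Theta(k^2)$ gadgets of $O(k)$ vertices each, i.e.\ $O(k^3)$ vertices independent of $N$; this is the source of the exponent $3$. Choosing $\ell=\Theta(k^{3+\varepsilon})$ gives $n=\Theta(\ell)$ and $N=(k-1)\ell=\Theta(nk)=\Theta(n^{1+1/(3+\varepsilon)})$. Your ``bipartite-style'' fallback points in this direction. However, the bit must be encoded by \emph{which} hub ($u_i$ or $v_i$) receives the edge, not by the presence or absence of a gadget.
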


Without loss of generality, we assume that $k= \omega(\text{poly}\log n)$. Otherwise, the problem becomes trivially semi-streaming tractable; run the filter of the simple algorithm in Sec~\ref{sec:intro} and store the edges which pass the filter. The resultant graph will have degeneracy bounded by $k$ and therefore, can be stored and colored offline in semi-streaming space.


\subsection{Technical Overview}
\paragraph*{Semi-Streaming Algorithm.}

Our approach is based on the observation that partial coloring can be viewed as a proper coloring
problem on a suitable spanning subgraph.
By definition, any $k$-partial coloring of a graph $G$ implicitly induces a proper coloring on
a subgraph $G' \subseteq G$, which acts as a \emph{witness} for partial coloring:
if a coloring is proper on $G'$, then every vertex has at least
$\min\{k,\deg_G(v)\}$ neighbors colored differently in $G$.

We exploit this perspective by explicitly constructing such a witness subgraph.
Starting from the input graph $G$, we repeatedly delete edges whose both endpoints currently
have degree strictly larger than $k$.
This process terminates with a subgraph $G'$ in which every edge has at least one endpoint
of degree at most $k$.
A key property of this construction is that any proper coloring of $G'$ is automatically
a valid $k$-partial coloring of the original graph $G$.

Once the problem is reduced to properly coloring the witness graph $G'$, we aim to apply
\emph{palette sparsification}, following the general strategy of~\cite{ACK19}.
However, unlike the standard $(\Delta+1)$-coloring setting, palette sparsification does not
immediately apply here: vertices sample colors from the fixed palette $[k+1]$, rather than
from degree-dependent palettes. Moreover, we might end up in situations where the graph $G'$ has to be properly colored but $k<\Delta(G')$ (refer to Fig.~\ref{fig}).
This mismatch prevents a direct application of existing sparsification results.

To overcome this difficulty, we adapt palette sparsification to the structure of the witness graph.
The key idea is to exploit the degree asymmetry enforced by the witness construction.
We partition the vertices of $G'$ into \emph{high-degree} and \emph{low-degree} vertices.
By construction, the set of high-degree vertices (those with degree greater than $k$)
forms an independent set.
This allows us to color all high-degree vertices first using a small subset of colors.
In particular, we show that they can be colored with at most $k/4$ colors with high probability,
using only their randomly sampled palettes.

After fixing the colors of the high-degree vertices, we proceed to color the low-degree vertices.
Each low-degree vertex has at most $k$ neighbors in total, and only a small number of colors
are forbidden by already-colored neighbors.
Although the palettes of these vertices were sampled from $[k+1]$, we show that the
\emph{effective palettes} obtained after removing forbidden colors remain sufficiently large
and retain the necessary randomness.
This enables the application of a variant of  palette sparsification theorem for $(\deg+1)$-list coloring~\cite[Theorem~4]{HKNT22}
to complete the coloring.

To translate this existential argument into a streaming algorithm, we introduce a three-phase
procedure.
In \emph{Phase~1 (Filtering)}, edges are discarded once both endpoints have accumulated degree $k$,
ensuring that the maintained graph remains a witness.
In \emph{Phase~2 (Palette Sparsification)}, edges whose endpoints have disjoint sampled palettes
are discarded, since they cannot induce color conflicts; only potentially constraining edges
are stored.
Finally, in \emph{Phase~3 (Offline Solving)}, the algorithm solves the resulting
demand-partial list coloring instance offline.
By the adapted palette sparsification argument, a solution exists with high probability,
and any such solution corresponds to a proper coloring of $G'$ and hence to a valid
$k$-partial coloring of $G$.

\paragraph*{Lower Bound.}

We base ourselves on the lower bound technique of~\cite{BCG20} who prove that there exists a $\kappa$ such that coloring an input $n$-vertex $\kappa$-degenerate graph $G$ with $\kappa+1$ colors requires $\Omega(n^2)$ bits in insertion-only stream. Their proof follows a standard reduction from the communication problem \ind where Alice is given a vector $X\in \{0,1\}^N$ and Bob, an index $i\in [N]$ and Bob outputs the value $X_i$.  They follow a standard reduction procedure of shaping $X$ as a square matrix $A[n\times n]$ and the index is mapped as $(g,h)$. These are then encoded as a graph $G'$ where two sets of vertices $U$ and $W$ represent the rows and columns of $A$ such that for vertices $u_i\in U$ and $w_j\in W$, $\{u_i,w_j\}\in E(G')$ if and only if $A[i,j] = 1$. The graph $G'$ is encoded in such a way that \emph{for any $\kappa+1$ coloring $\gamma$, for vertices $u_g\in U$ and $w_h\in W$, $\gamma(u_g)\neq \gamma(w_h)$ if and only if $A[g,h]=1$}.

We follow the same strategy and reshape the vector $X$ into a rectengular matrix $A[(k-1)\times \ell]$ and the index $i$ mapped as $(t,q)$ and Alice and Bob construct a graph $G$ encoding $A$ and  $(t,q)$. Let $R$ and $C$ be the  row and column vertices in $G$ corresponding to $A$. Alice  ensures that for each vertex $v\in C$ : 1) $\deg_G(v) = k-1$ and 2) $\chi(N_G[v])=[k]$ and for each vertex $u\in R$, $|\chi(N_G[u])| =k-1$. Bob, based on the index $(t,q)$ adds edges to the graph ensuring that $\chi(N_G[r_t]) = [k-1]$.   This is ensured via the special gadgets \emph{edge-gadgets} and \emph{color repeaters} which, when added between two vertices force the two vertices to take different and same colors respectively. These properties, along with the fact that any $k$-partial $k$-coloring $\chi$ will color all the incident edges of $v$ properly w.r.t it ensures that  $G$ has to satisfy the property that for   vertices $r_t\in R$ and $c_q\in C$, $\chi(r_t)\neq \chi(c_q)$ if and only if $A[t,q]=1$. 

Any streaming algorithm acts as a one-way communication protocol and in the end, the coloring of the vertices reveal the bit $A[t,q]$. The space bound follows from the fact that the $|V(G)| = O(k +\ell)$, setting $k^{3+\varepsilon} = O(\ell) $ and that \ind requires $\Omega(k\ell)$ bit one-way communication bandwidth. 

\section{Preliminaries}\label{section:prelims}

We write $[k] := \{1,2,\dots,k\}$. All logarithms are base two unless stated otherwise.
All graphs considered in this paper are simple, undirected, and have $n$ vertices.

Throughout the paper, we sample colors \emph{independently and uniformly at random with
replacement} from a given palette. Formally, sampling with replacement from a finite set $U$ means that each draw is an
independent random variable distributed uniformly over $U$. Consequently, the same
color may appear multiple times in a sampled list. 

\begin{lemma}\label{lemma:conditional_sampling}
	Let $X$ be a color sampled  uniformly at random from $[k+1]$. Let $U_u\subseteq [k+1]$ be any fixed subset. Conditioned on the event that $X\in U_u$, the color $X$ is uniformly distributed over $U_u$.
\end{lemma}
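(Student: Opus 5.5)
The plan is to verify the claim directly from the definition of conditional probability; no earlier result of the paper is needed. We may assume $U_u \neq \emptyset$, since otherwise the conditioning event $\{X\in U_u\}$ has probability zero and there is nothing to prove. Because $X$ is uniform on $[k+1]$, every color $c\in[k+1]$ satisfies $\Pr[X=c]=\frac{1}{k+1}$. Summing over the elements of $U_u$ gives
\[
\Pr[X\in U_u]=\frac{|U_u|}{k+1}>0 .
\]

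Next, fix any $c\in U_u$. Since $c\in U_u$, the event $\{X=c\}$ is contained in $\{X\in U_u\}$, so $\{X=c\}\cap\{X\in U_u\}=\{X=c\}$. Hence
\[
\Pr[X=c \mid X\in U_u]=\frac{\Pr[X=c]}{\Pr[X\in U_u]}=\frac{1/(k+1)}{|U_u|/(k+1)}=\frac{1}{|U_u|}.
\]
For $c\notin U_u$ the conditional probability is $0$. So the conditional law of $X$ puts equal mass on every element of $U_u$ and none outside it, which is exactly the uniform distribution on $U_u$.

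The only points needing care are that $U_u$ must be fixed in advance, i.e.\ independent of $X$, and that the positivity of $\Pr[X\in U_u]$ is what makes the conditioning well defined. The first point matters for later use. When the lemma is applied to one draw in a list sampled with replacement, $U_u$ may depend on other draws or on colors chosen for other vertices. Independence of the draws then lets us condition on all of that information first, apply the lemma to the remaining single draw, and conclude uniformity on $U_u$. There is no substantive obstacle here.
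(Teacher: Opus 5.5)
Your proof is correct and is essentially the paper's own argument: both compute $\Pr[X=c\mid X\in U_u]=\frac{1/(k+1)}{|U_u|/(k+1)}=\frac{1}{|U_u|}$ for each $c\in U_u$ directly from the definition of conditional probability. Your additional remarks are sound points that the paper leaves implicit: that $U_u$ must be nonempty, that the conditional law puts zero mass outside $U_u$, and that $U_u$ must be independent of the draw being conditioned.
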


\begin{proof}
For any $c\in U_u$,
\[
\Pr[X=c \mid X\in U_u]
=
\frac{\Pr[X=c]}{\Pr[X\in U_u]}
=
\frac{1/(k+1)}{|U_u|/(k+1)}
=
\frac{1}{|U_u|}.
\]
\end{proof}

We use standard Chernoff bounds for sums of independent Bernoulli random variables~\cite{MU17}.

Let $X=\sum_{i=1}^m X_i$, where the $X_i$ are independent and take values in $\{0,1\}$.
Then, for any $0<\delta<1$,
\[
\Pr\bigl[X \le (1-\delta)\mathbb{E}[X]\bigr]
\;\le\;
\exp\!\left(-\frac{\delta^2}{2}\mathbb{E}[X]\right).
\]
\paragraph*{Degeneracy.}

A graph $G=(V,E)$ is \emph{$d$-degenerate} if every non-empty subgraph $H\subseteq G$
contains a vertex of degree at most $d$.

Equivalently, $G$ is $d$-degenerate if there exists an ordering of the vertices
$v_1,\dots,v_n$ such that for every $i$, vertex $v_i$ has at most $d$ neighbors among
$\{v_{i+1},\dots,v_n\}$. We orient each edge $\{v_i,v_j\}$ with $i<j$
from $v_i$ to $v_j$. The \emph{outdegree} of $v_i$ is defined as
\[
\outdeg(v_i)
=
\bigl|\{\, v_j : j>i \text{ and } \{v_i,v_j\}\in E \,\}\bigr|.
\]
With this orientation, $G$ is $d$-degenerate if and only if there exists an ordering such
that $\outdeg(v_i)\le d$ for all $i\in[n]$.

\paragraph*{Semi-Streaming Algorithms.}

We work in the \emph{semi-streaming model} for graph algorithms.

An input graph $G=(V,E)$ is presented as an insertion-only stream of edges. The algorithm knows the vertex set $V$ apriori and
processes the stream of edges in one pass while maintaining a memory state of size at most
$\widetilde{O}(n)$ bits, where polylogarithmic factors in $n$ are suppressed.

The algorithm may be randomized and succeeds with high probability (i.e. with probability at least $\left(1 - 1/n^c\right)$ for some constant $c$). At the end of the stream, the algorithm outputs a solution to the
given graph problem.

\paragraph*{Communication Complexity.}

We use communication complexity to prove lower bounds for streaming algorithms.

In the \emph{one-way communication model}, there are two players, Alice and Bob. Alice
receives an input $x$, Bob receives an input $y$, and Alice sends a single message to
Bob. Based on this message and his input, Bob must compute a function $f(x,y)$.

The communication cost is the number of bits sent from Alice to Bob. Protocols may be
randomized and are required to succeed with probability at least $2/3$.

A standard connection between streaming algorithms and communication complexity is that
any one-pass streaming algorithm using $S$ bits of space yields a one-way communication
protocol with communication cost $S$, where Alice simulates the algorithm on the first
part of the stream and sends the memory state to Bob.

\paragraph*{Index.} The \ind problem  is defined as follows:
Alice holds a bit string $X\in\{0,1\}^N$, Bob holds an index $j\in[N]$, and Bob must output
the bit $X_j$.

It is well known that any one-way randomized protocol for Index that succeeds with
probability at least $2/3$ requires $\Omega(N)$ bits of communication~\cite{JKS08}.

%
\section{Semi-Streaming Algorithm for $k$-partial $(k+1)$-coloring}\label{sec:semi-streaming-algorithm}

\begin{figure}[t]
   \centering
    \includegraphics[width=0.4\linewidth]{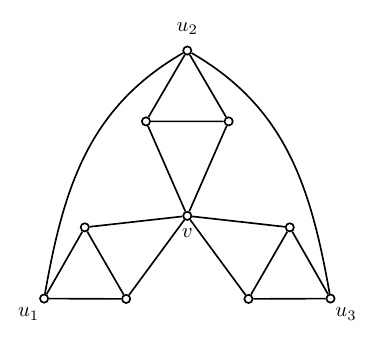}
    \caption{Example of a graph where $3$-partial coloring implies a proper coloring. Observe that the vertices $v$ and $u_2$ have degree greater than $3$ but are surrounded by degree $3$ vertices which enforce a proper coloring on them.}
  \label{fig}
\end{figure}

This section is dedicated to design, correctness and analysis of a single pass insertion-only semi-streaming algorithm for partial coloring. Specifically, we prove the following theorem. 

\begin{theorem}\label{thm:k-partial-streaming}
There exists a one-pass randomized semi-streaming algorithm that, with high probability, produces a $k$-partial coloring of an input graph $G$ using $k+1$ colors,  
where $k \in \mathbb{N}$ is given as input prior to the stream and $G$ is presented as an insertion-only stream.
\end{theorem}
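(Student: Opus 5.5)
The plan follows the three-phase outline of the technical overview and splits the argument into a structural part (the witness) and a probabilistic part (palette sparsification on the witness).

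\textbf{Step 1: the witness.} I will first fix the filtered graph $G_f=(V,E_f)$ produced online: an arriving edge $\{u,v\}$ is kept iff the current $G_f$-degree of $u$ or of $v$ is below $k$. Every vertex $v$ then has $\deg_{G_f}(v)\ge \min\{k,\deg_G(v)\}$. Indeed, if $v$ ends with fewer than $k$ kept edges, then every edge at $v$ arrived while $v$'s degree was below $k$ and was kept. Hence any proper coloring of $G_f$ is a $k$-partial coloring of $G$ (Lemma~\ref{lemma:witness-implies-partial}).

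The structural fact I need is the following. Call $v$ \emph{low} if $\deg_{G_f}(v)\le k$ and \emph{high} otherwise. Each edge of $G_f$ was accepted because one endpoint had degree $<k$ at that time, and I will charge the edge to that endpoint. A vertex receives at most $k$ charges, and once its degree reaches $k$, every later edge at it is charged to the other endpoint. Two high vertices $u,v$ could be adjacent, so I will try to replace this weak fact by a cleaner statement: either keep only edges charged to a low endpoint, or directly argue that each edge has an endpoint of final degree at most $k$ after the offline deletion process from the overview. If I adopt the deletion formulation (repeatedly delete edges whose two endpoints both have degree $>k$ in $G_f$), the result is a witness $G'$ in which high vertices form an independent set. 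The deletion is done offline at the end, since $G_f$ is $k$-degenerate and thus has at most $kn$ edges. That is too many to store, so the deletion must be applied to the sparsified graph; I will check that deciding high or low only needs degree counters, which take $O(n\log n)$ space.

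\textbf{Step 2: palette sparsification for $G'$.} Each vertex samples $L(v)$ of $\Theta(\log n)$ colors from $[k+1]$ independently at the start. During the stream I store only edges of $G_f$ whose endpoints' lists intersect; in expectation this is $O(kn\cdot\log^2 n/k)=\widetilde O(n)$ edges, with concentration via Chernoff. The storage and degree counters are then semi-streaming, and any $L$-list proper coloring of the stored graph is proper on $G'$.

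The existence of such a coloring is the core. High vertices form an independent set, so I color them greedily using only colors in $[k/4]$. With $\Theta(\log n)$ samples, each list hits $[k/4]$ with high probability, about $1/4$ of the time. I will take the high vertices' choices to be arbitrary colors from $L(v)\cap[k/4]$, using the first sample that lands there. Then each low vertex $u$ has $d_h(u)$ high neighbors and $d_\ell(u)\le k-d_h(u)$ low neighbors. By Lemma~\ref{lemma:conditional_sampling}, the samples of $u$ falling in $U_u=[k+1]\setminus\chi(\text{high nbrs})$ are uniform over $U_u$, and $|U_u|\ge k+1-d_h(u)\ge d_\ell(u)+1$. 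So the residual instance on low vertices is a $(\deg+1)$-list coloring instance with lists sampled from $U_u$. Applying \cite[Theorem~4]{HKNT22} then gives a proper list coloring with high probability.

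\textbf{Main obstacle.} The hardest part is making Step 2 rigorous. The conditioning on high-vertex colors must not correlate with low vertices' samples; for this I plan to split each $L(u)$ into two independent halves, one for the high phase and one for the low phase. I also need the number of samples landing in $U_u$ to be $\Omega(\log n)$, which holds since $|U_u|\ge (3/4)(k+1)$ once high vertices use only $[k/4]$. Finally, HKNT's theorem must tolerate palettes $U_u$ that differ per vertex; it does, being a $(\deg+1)$-list statement. A union bound over the $n$ vertices then finishes the high-probability guarantee.
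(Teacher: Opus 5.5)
\textbf{The gap is algorithmic, not probabilistic.} Your Step~2 core matches the paper's Theorem~\ref{theorem:palette-sparsification-partial-coloring}: high vertices form an independent set and are colored from a fixed $T$ of size $k/4$ using one half of each list, then Lemma~\ref{lemma:conditional_sampling} and Chernoff feed the other half into the $(\deg+1)$-list theorem. The conditioning issue you flag is handled the same way there. The problem is that you prove existence of a list coloring for the witness $G'$, but the algorithm only holds $H$ (edges of $G_f$ with intersecting lists) plus counters. None of your options turns this into an offline problem that is computable and guaranteed to have a solution.

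\textbf{Why each option fails.} \emph{Proper list coloring of all of $H$} (as your Step~2 suggests) can fail. Let $G$ be $K_k$ joined to an independent set $B$ with $|B|=k$, with the clique edges arriving first. Every edge passes the filter, so every vertex of $B$ must take the single color missing from the clique, and w.h.p.\ no color lies in all their lists.

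\emph{Keeping only edges charged to a low endpoint} is not a witness. Suppose $v$ first receives edges to $w_1,\dots,w_{k+1}$, and afterwards each $w_i$ receives $k$ pendant neighbors. Then $v,w_1,\dots,w_{k+1}$ are all high, so $v$ keeps no edge. Nothing then stops $v$ from sharing its color with all its neighbors, although $\deg_G(v)=k+1$.

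\emph{Running the deletion process from $H$ and degree counters} does not work either, for three reasons. In the same example, deleting $vw_1$ leaves $v$ with degree exactly $k$, so high/low in $G'$ is not determined by $\deg_{G_f}$. Deletions of unstored edges change degrees in ways you cannot reconstruct. And if you delete only stored edges, the resulting graph depends on the random lists, so the palette-sparsification theorem (fixed graph, random lists) no longer applies.

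\textbf{The missing idea: never compute a witness.} Also maintain $\free(v)$, the number of accepted edges at $v$ with disjoint lists. Offline, find any $\chi$ with $\chi(v)\in L(v)$ such that every $v$ has at least $\max\{0,\min\{k,\deg_{G_f}(v)\}-\free(v)\}$ differently colored neighbors in $H$. This is the paper's demand-partial list coloring.
\begin{itemize}
\item Unstored edges of $G_f$ are bichromatic under every list-respecting coloring. So this condition says exactly that $\chi$ is $k$-partial on $G_f$, hence on $G$ by Lemma~\ref{lemma:witness-implies-partial}.
\item A solution exists w.h.p. The witness $W$ of $G_f$ is a fixed graph independent of the lists, so your Step~2 argument gives a proper list coloring of $W$. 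By Lemma~\ref{lemma:witness_graph} (with $G_f$ in place of $G$), that coloring is $k$-partial on $G_f$.
\end{itemize}
The witness thus appears only in the analysis.

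\textbf{Two smaller points.} The $(\deg+1)$-list theorem as invoked needs $\Theta(\log^2 n)$ samples per vertex, not $\Theta(\log n)$; this gives $|E(H)|=O(n\log^4 n)$. Also, bounding $|E(H)|$ needs more than a plain Chernoff bound, since the storage events share lists. The paper orients $G_f$ along a $k$-degeneracy order and conditions on $L(v)$, so that $v$'s out-edges are stored independently.
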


\subsection{Palette Sparsification for Partial Coloring}\label{subsec:palette-sparsification}


\begin{definition}[Witness for Partial Coloring]
    Let $G=(V,E)$ be a graph and $k$ a positive integer. A \emph{witness for $k$-partial coloring} is a subgraph $G' \subseteq G$ obtained by the following procedure:
    \begin{enumerate}
        \item Initialize $G' = G$.
        \item While there exists an edge $\{u,v\} \in E(G')$ such that $\deg_{G'}(u) > k$ and $\deg_{G'}(v) > k$, remove $\{u,v\}$ from $E(G')$.
    \end{enumerate}
    The resultant graph satisfies that for every edge $\{u,v\} \in E(G')$, $\min\{\deg_{G'}(u), \deg_{G'}(v)\} \leq k$.
\end{definition}

\begin{lemma}\label{lemma:witness_graph}
    Let $G'$ be a witness for $k$-partial coloring of $G$. If $\chi$ is a proper coloring of $G'$, then $\chi$ is a $k$-partial coloring of $G$.
\end{lemma}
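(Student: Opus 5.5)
The plan is to show that every vertex $v$ keeps at least $\min\{k,\deg_G(v)\}$ incident edges in $G'$. Once that holds the lemma follows at once: $\chi$ is proper on $G'$, so every edge of $G'$ at $v$ joins $v$ to a neighbor of a different color. Since $E(G')\subseteq E(G)$, these neighbors are distinct neighbors of $v$ in $G$, and hence
\[
\bigl|\{u\in N_G(v):\chi(u)\neq\chi(v)\}\bigr|\;\ge\;\deg_{G'}(v)\;\ge\;\min\{k,\deg_G(v)\}.
\]

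To prove the degree bound, fix $v$ and track $\deg_{G'}(v)$ through the deletion procedure. At the start $\deg_{G'}(v)=\deg_G(v)$, and each deletion lowers it by at most one. An edge $\{v,u\}$ is deleted only when $\deg_{G'}(v)>k$ at that moment, that is, $\deg_{G'}(v)\ge k+1$. After the deletion we still have $\deg_{G'}(v)\ge k$.

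We split into two cases. If $\deg_G(v)\le k$, no edge at $v$ is ever deleted, so $\deg_{G'}(v)=\deg_G(v)$. If $\deg_G(v)>k$, then an induction on the number of deletions shows that the degree of $v$ never falls below $k$: each deletion at $v$ starts from a degree of at least $k+1$ and ends at a degree of at least $k$. In both cases $\deg_{G'}(v)\ge\min\{k,\deg_G(v)\}$.

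There is no real obstacle here. The one point that needs care is that the invariant must be maintained throughout the process, not merely at termination, because the deletion condition refers to the current degrees in $G'$. The procedure terminates because each step removes an edge, but termination is not actually needed for the invariant itself.
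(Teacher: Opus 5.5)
Your proposal is correct and follows the paper's proof. Both arguments establish $\deg_{G'}(v)\ge\min\{k,\deg_G(v)\}$ with the same two-case split, using the fact that an edge at $v$ is deleted only while its current degree exceeds $k$, and then both use properness of $\chi$ on $G'\subseteq G$ to count enough differently colored neighbors in $G$.
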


\begin{proof}
    We first establish a lower bound on the degrees in $G'$. We claim that for every vertex $v \in V(G)$, 
    \[
        \deg_{G'}(v) \geq \min\{k, \deg_G(v)\}.
    \]
    Consider the edge deletion procedure defining $G'$. An edge incident to $v$ is removed only if $\deg_{current}(v) > k$. Consequently, the degree of $v$ can never be reduced to a value less than $k$, as the removal condition would fail before such a reduction could occur.
    \begin{itemize}
        \item If $\deg_G(v) \leq k$, the condition $\deg(v) > k$ is never met, so no edges incident to $v$ are ever removed. Thus, $\deg_{G'}(v) = \deg_G(v)$.
        \item If $\deg_G(v) > k$, edges incident to $v$ may be removed, but the process must halt for $v$ if its degree reaches $k$. Thus, $\deg_{G'}(v) \geq k$.
    \end{itemize}
    Combining these cases, we have $\deg_{G'}(v) \geq \min\{k, \deg_G(v)\}$.

    Now, since $\chi$ is a proper coloring of $G'$, $v$ is assigned a color distinct from all its neighbors in $G'$. The number of such neighbors is $\deg_{G'}(v)$.
    Using our lower bound, the number of neighbors of $v$ in $G$ colored differently than $v$ is at least $\min\{k, \deg_G(v)\}$. This satisfies the definition of a $k$-partial coloring.
\end{proof}

Our aim now is to prove palette sparsification for the witness graph $G'$. Before proceeding to prove the theorem, we rephrase the palette sparsification theorem for $(\deg+1)$-list coloring~\cite[Theorem~4]{HKNT22}  which we will use as a blackbox in our proof. 

\begin{theorem}[Palette sparsification for $(\deg+1)$-list coloring]\label{thm:palette-sparsification}
	Let $G$ be an $n$-vertex graph. Suppose each vertex $v \in V(G)$ is assigned a list $L(v)$ of $\deg(v)+1$ colors. For each vertex $v$, independently sample a subset $L'(v) \subseteq L(v)$ of size $\Theta(\log^2 n)$ uniformly at random. Then, with high probability, there exists a proper coloring
\[
	\mathcal{C} \colon V(G) \to \bigcup_{v\in V(G)} L(v)
\]
such that $\mathcal{C}(v) \in L'(v)$ for every vertex $v \in V(G)$.
\end{theorem}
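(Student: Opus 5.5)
The plan is to derive this statement directly from \cite[Theorem~4]{HKNT22} rather than re-prove palette sparsification from scratch. The work lies in checking that our phrasing (lists of size exactly $\deg(v)+1$, samples of size $\Theta(\log^2 n)$, colors drawn from $\bigcup_v L(v)$) is implied by theirs. The original theorem states that if every vertex $v$ has a list of at least $\deg(v)+1$ colors and samples each color independently (or a uniformly random subset) with $\Theta(\log^2 n)$ expected colors, then w.h.p.\ the sampled lists admit a proper list coloring. I would therefore carry out three reductions in order.

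\textbf{Step 1: the sampling model.} If \cite{HKNT22} samples each color of $L(v)$ independently with probability $p_v=\Theta(\log^2 n)/(\deg(v)+1)$, I would couple it with our uniform fixed-size subsets. By a Chernoff bound, the independent sample has size at most $C\log^2 n$ w.h.p., and conditioned on its size it is a uniformly random subset of that size. A uniformly random subset of size $C\log^2 n$ then stochastically dominates it: we can extend the smaller uniform subset by further uniform elements. \emph{Monotonicity} finishes the step, since enlarging each $L'(v)$ can only preserve the existence of a list coloring. The failure probabilities are added via a union bound over the $n$ vertices.

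\textbf{Step 2: vertices of small degree.} When $\deg(v)+1\le \Theta(\log^2 n)$, a "subset of size $\Theta(\log^2 n)$" is not well defined. Here I would set $L'(v)=L(v)$, which is again consistent with monotonicity, and note that the \cite{HKNT22} statement already covers such vertices, since in the independent model they keep all colors with probability capped at $1$.

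\textbf{Step 3: the color range.} Every color assigned lies in some $L(v)$, so $\mathcal{C}$ maps into $\bigcup_v L(v)$ trivially.

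The main obstacle I expect is Step~1, namely making sure the precise probabilistic model of \cite[Theorem~4]{HKNT22} (independent per-color sampling versus fixed-size subsets, and the exact polylog) transfers without loss. The monotonicity-plus-coupling argument should handle this. If their theorem instead uses sampling with replacement, as in our Preliminaries, I would use the same coupling: the distinct colors of $\Theta(\log^2 n)$ draws with replacement form a uniformly random subset, conditioned on its size. That size is $\Theta(\log^2 n)$ w.h.p.\ when $\deg(v)+1\ge 2\Theta(\log^2 n)$, and all of $L(v)$ is covered w.h.p.\ otherwise after a constant-factor increase in the number of draws.
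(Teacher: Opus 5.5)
Your proposal takes the same route as the paper. The paper gives no proof of this statement: it restates \cite[Theorem~4]{HKNT22} and uses it as a black box. Your Step~1 is a sound way to absorb any mismatch in the sampling model: condition on the sample size to get a uniform subset, extend it to a uniform subset of size $C\log^2 n$, and use monotonicity of list-colorability.

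\textbf{Correction to your last paragraph.} You claim that a constant-factor increase in the number of draws covers all of $L(v)$ w.h.p. This is false when $|L(v)|=d=\Theta(\log^2 n)$. With $m=\Theta(d)$ draws, a fixed color is missed with probability about $e^{-m/d}$, which is a constant. So in expectation $\Theta(\log^2 n)$ colors are missed, and w.h.p.\ coverage needs $\Theta(d\log n)=\Theta(\log^3 n)$ draws.

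You do not need that claim, nor the lower bound on the number of distinct draws. Suppose the hypothesis uses $m$ draws with replacement and the target is a uniform subset. It suffices that the target contain the set $D$ of distinct drawn colors. Since $|D|\le m$ always, and $D$ is uniform given $|D|$, the extension coupling of Step~1 shows that $D$ is contained in a uniform subset of size $\ge m$, or in all of $L(v)$ when $|L(v)|\le m$. The lower bound on $|D|$ and the coverage claim matter only for the reverse implication, from the subset version to a with-replacement version. This statement does not require that direction.
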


 Without loss of generality, for Theorem~\ref{thm:palette-sparsification}, we assume that for each vertex $v\in V$,  $|L'(v)| =  \alpha\log^2 n$ for some fixed constant $\alpha $ and each color in $L'(v)$ is independently sampled uniformly at random with replacement from the palette of $L(v)$.

\begin{theorem}\label{theorem:palette-sparsification-partial-coloring}
	Given an $n$-vertex graph $G$ and a positive integer $k$, let $G'$ be the witness for $k$-partial coloring of $G$.  Suppose that for every vertex $v\in V(G')$, we sample a list $L(v)$ of $\Theta(\log^2 n)$ colors uniformly at random from the palette $[k+1]$. Then, with high probability, there exists a proper $(k+1)$-coloring $\mathcal{C}' \colon V(G')\to [k+1]$ of $G'$ such that $\mathcal{C}'(v)\in L(v)$ for all $v\in V(G')$.
\end{theorem}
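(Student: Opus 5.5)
Following the technical overview, I split $V(G')$ into high-degree vertices $H=\{v:\deg_{G'}(v)>k\}$ and low-degree vertices $L_{\mathrm{low}}=V\setminus H$. By the defining property of the witness, every edge of $G'$ has an endpoint of degree at most $k$, so $H$ is an independent set in $G'$. The plan is a two-stage coloring. In the first stage the high vertices receive colors from a small fixed subpalette. In the second stage the low vertices are colored by invoking Theorem~\ref{thm:palette-sparsification} on the residual instance.

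\emph{Stage 1 (high vertices).} Fix $P=[\lfloor k/4\rfloor]$. Since $k=\omega(\log n)$ (indeed $\omega(\mathrm{poly}\log n)$ by assumption), each sample lands in $P$ with probability about $1/4$. The probability that none of the $\Theta(\log^2 n)$ samples of $v$ lands in $P$ is therefore $(3/4)^{\Theta(\log^2 n)}=n^{-\omega(1)}$. A union bound shows that with high probability every $v\in H$ has a color in $L(v)\cap P$, and we assign it. Because $H$ is independent, this coloring is proper on $G'[H]$. I would actually use only the first half of each list for this stage, so that the second half remains fresh randomness independent of Stage 1.

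\emph{Stage 2 (low vertices).} Each $u\in L_{\mathrm{low}}$ has $\deg_{G'}(u)\le k$. Let $h(u)$ be its number of high neighbours and $d(u)=\deg_{G'}(u)-h(u)$ its number of low neighbours. Let $F(u)$ be the set of colors used by its high neighbours. Then $F(u)\subseteq P$ and $|F(u)|\le\min\{h(u),k/4\}$. Define the effective palette $U_u=[k+1]\setminus F(u)$, which has size at least $k+1-h(u)\ge d(u)+1$. Trim it to an arbitrary list $L^*(u)\subseteq U_u$ of exactly $d(u)+1$ colors; here $L^*(u)$ depends only on Stage 1. This gives a $(\deg+1)$-list coloring instance on $G'[L_{\mathrm{low}}]$. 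Now consider the second-half samples of $u$. They are independent of Stage 1. Conditioned on lying in $L^*(u)$, each such sample is uniform on $L^*(u)$ by Lemma~\ref{lemma:conditional_sampling}. The hit probability per sample is $(d(u)+1)/(k+1)$, so the number of samples falling into $L^*(u)$ is a binomial variable, and the conditionally accepted samples are i.i.d.\ uniform on $L^*(u)$. If each low vertex has at least $\alpha\log^2 n$ accepted samples, we can take the first $\alpha\log^2 n$ of them as $L'(u)$. Theorem~\ref{thm:palette-sparsification} (in the with-replacement form assumed above) then yields a proper list coloring of the low vertices avoiding $F(u)$. Combined with Stage 1, this gives a proper coloring of $G'$ using colors from the sampled lists.

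\emph{Main obstacle.} The difficulty is that $d(u)+1$ can be tiny compared with $k+1$, for instance when $u$ has almost all of its neighbours high, or has small degree. Then few samples hit $L^*(u)$, and the Chernoff argument fails. I would handle this by enlarging the trimmed list instead of cutting it to $d(u)+1$. Take $L^*(u)=U_u$ itself, of size at least $k+1-k/4\ge 3k/4$. A list of size at least $\deg+1$ only makes the instance easier, so I would rely on monotonicity of Theorem~\ref{thm:palette-sparsification} in list size: either restate it for lists of size at least $\deg+1$, as in~\cite{HKNT22}, or pad the vertex with dummy degree. Each sample then hits $U_u$ with probability at least $3/4$. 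Chernoff gives at least $\tfrac12\cdot\tfrac34\cdot\Theta(\log^2 n)$ hits with probability $1-n^{-\omega(1)}$, and choosing the constant in $\Theta(\log^2 n)$ large enough makes this at least $\alpha\log^2 n$. A final union bound over all vertices and both stages gives the high-probability statement. The step I expect to need most care is justifying the use of Theorem~\ref{thm:palette-sparsification} with these oversized, Stage-1-dependent lists while keeping the conditional uniformity and independence across vertices. This is exactly why Stage 1 uses only the first half of each list and Stage 2 only the second.
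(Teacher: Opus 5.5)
Your proposal is correct and follows the paper's proof essentially step for step. Both split each sampled list into two independent parts and color the independent set of vertices with $\deg_{G'}>k$ from a fixed set of about $k/4$ colors. Both then apply the $(\deg+1)$-list palette sparsification theorem to the low-degree part with effective palettes $U_u=[k+1]\setminus F(u)$ of size at least $\max\{d(u)+1,\,3k/4\}$, using conditional uniformity and a Chernoff bound on the surviving samples. The only real difference is that you explicitly point out that these palettes may exceed $\deg+1$ and resolve it via padding or the $\geq\deg+1$ form of the theorem; the paper applies the theorem to such palettes without comment.
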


\begin{proof}
    We prove the existence of such a coloring using a two-phase randomized procedure.
    First, for every vertex $v \in V(G')$, we split the sampled list $L(v)$ into two disjoint sub-lists $L_1(v)$ and $L_2(v)$ such that $|L_1(v)| = C_1 \log n$ and $|L_2(v)| = C_2 \log^2 n$ for sufficiently large constants $C_1, C_2$.

    We partition the vertex set $V(G')$ into two sets: $S = \{v \in V(G') \mid \deg_{G'}(v) > k\}$ and $R = V(G') \setminus S$.
    By the definition of a witness (specifically the constraint that for every edge $\{u,v\}$, $\min\{\deg_{G'}(u), \deg_{G'}(v)\} \leq k$), no two vertices with degree strictly greater than $k$ can be adjacent. Consequently, $S$ is an independent set in $G'$.

    The coloring proceeds in two phases:
    \begin{itemize}
        \item \textbf{Phase 1 (Coloring $S$):} We assign colors to vertices in $S$ using only the palettes $L_1(v)$.
        \item \textbf{Phase 2 (Coloring $R$):} We extend the coloring to $R$ using the palettes $L_2(v)$.
    \end{itemize}

    \paragraph{Phase 1: Coloring the Independent Set $S$.}
    We claim that with high probability, $S$ can be properly colored using a subset of colors $T \subseteq [k+1]$ of size at most $k/4$, such that for every $v \in S$, the assigned color belongs to $L_1(v) \cap T$.

    To show this, we construct a bipartite \emph{palette graph} $P = (S \cup [k+1], E_P)$, where an edge $\{v, c\}$ exists if and only if $c \in L_1(v)$. A subset of colors $T \subseteq [k+1]$ is said to \emph{dominate} $S$ in $P$ if every vertex $v \in S$ has at least one neighbor in $T$. If such a set $T$ exists, we can define a valid coloring $\chi_1$ for $S$ by setting $\chi_1(v) = \min \{c \in T \cap L_1(v)\}$. Since $S$ is an independent set, no conflicts can arise between vertices in $S$ regardless of the color choices.

    We now show that there exists $T \subseteq [k+1]$ of size $k/4$ that dominates $S$ with high probability.
    Let $T$ be an arbitrarily chosen but fixed subset of $[k+1]$ with $|T| = k/4$. For any fixed vertex $v \in S$, the probability that $L_1(v) \cap T = \emptyset$ is
    \[
        \mathbb{P}[L_1(v) \cap T = \emptyset] \leq \left(1 - \frac{|T|}{k+1}\right)^{|L_1(v)|} \approx \left(1 - \frac{1}{4}\right)^{C_1 \log n} \leq e^{-0.25 C_1 \log n} = n^{-0.25 C_1}.
    \]
    By choosing $C_1$ sufficiently large (e.g., $C_1 \geq 12$), we ensure this probability is at most $n^{-3}$.
    Taking a union bound over all vertices in $S$ (where $|S| \leq n$), the probability that there exists any vertex in $S$ not dominated by $T$ is at most $n \cdot n^{-3} = n^{-2}$.
    Thus, with high probability, there exists a dominating set $T$ of size $k/4$, and consequently a valid coloring $\chi_1$ for $S$ using only colors from $T$.

    \paragraph{Phase 2: Extending the coloring to $R = V \setminus S$.}
Let $\chi_1$ be the coloring of $S$ obtained in Phase 1. Recall that Phase 1 ensures the set of colors used in $S$, denoted $C(S)$, satisfies $|C(S)| \leq k/4$.

We extend this coloring to $R$. For each vertex $u \in R$, let $C_{\text{forb}}(u) = \{\chi_1(v) \mid v \in N_{G'}(u) \cap S\}$ be the set of colors forbidden by neighbors in $S$. We define the target palette for $u$ as $U_u = [k+1] \setminus C_{\text{forb}}(u)$.

Since $u \in R$, we know $\deg_{G'}(u) \le k$. The number of forbidden colors from $S$ is at most the number of neighbors $u$ has in $S$, which is $\deg_{G'}(u) - \deg_R(u)$. Therefore, it follows easily that $|U_u| \geq \deg_R(u) + 1$ meaning  that a valid extension exists if we had access to $U_u$ and coloring $R$ reduces to solving $(\deg+1)$-list coloring. 


	At this point, we cannot directly apply  Theorem~\ref{thm:palette-sparsification} to $R$ as it requires that for vertex $u$, the random list of size at least $\alpha\log^2 n$ be sampled uniformly from the available palette $U_u$ for some constant $\alpha$ while $L_2(u)$ is sampled from $[k+1]$. Let $L_{\text{eff}}(u) = L_2(u) \cap U_u$. We claim that $L_{\text{eff}}(u)$ satisfies the that 
	\begin{enumerate}
		\item Each color in  $L_{\text{eff}}(u)$ is uniformly distributed over $U_u$.
		\item $|L_{\text{eff}}(u)|\geq \alpha\log^2 n$ with high probability .
	\end{enumerate}


	The first item follows immediately from Lemma~\ref{lemma:conditional_sampling}.

	
Now, we show the second item. Since $C_{\text{forb}}(u) \subseteq C(S)$ and $|C(S)| \leq k/4$, we have $|U_u| \geq \frac{3}{4}(k+1)$.
	The size $|L_{\text{eff}}(u)|$ follows a Binomial distribution $B(|L_2(u)|, p)$ with $p = \frac{|U_u|}{k+1} \geq 3/4$. Fix $C_2 = 8\alpha/3$. 
	Using a Chernoff bound, with $|L_2(u)| = (8/3)\alpha\log^2 n$ and $\mathbb{E}[|L_{\text{eff}}(u)|]\geq 2\alpha\log^2 n$, the effective list size $|L_{\text{eff}}(u)|$ is at least $\alpha\log^2 n$ with high probability. Applying union bound, this holds to every vertex in $R$ with high probability.
	
Since for every vertex $w\in R$, $|L_{\text{eff}}(w)|$ is sufficiently large and conditionally uniform over available palette $U_w$, we can invoke the Palette Sparsification Theorem on the induced subgraph $G'[R]$ with palettes $U_w$. This guarantees that a valid coloring $\chi_2$ consistent with the lists exists with high probability.

We now combine the results of both phases to bound the total failure probability.
Let $E_1$ denote the event that Phase 1 successfully colors $S$ with a subset of colors $T$ such that $|T| \le k/4$. We established that $\mathbb{P}[E_1] \geq 1 - n^{-2}$.

Conditioned on the occurrence of $E_1$, the coloring $\chi_1$ and the forbidden sets $C_{\text{forb}}(u)$ are well-defined. Let $E_{2 \mid 1}$ denote the event that Phase 2 successfully extends this specific coloring to $R$ using the lists $L_2$. Our analysis in Phase 2 demonstrates that for any fixed successful outcome of Phase 1,
\[
	\mathbb{P}[E_{2 \mid 1}] \geq 1 - n^{-c} \text{ for some positive integer }c. 
\]
The event that a valid $(k+1)$-coloring exists for the entire graph $G'$ corresponds to the joint occurrence of both phases succeeding. By the chain rule of probability:
\[
	\mathbb{P}[\text{Success}] = \mathbb{P}[E_1] \cdot \mathbb{P}[E_{2 \mid 1}] \geq (1 - n^{-2})(1 - n^{-c}) \geq 1 - n^{-O(1)}.
\]
Thus, with high probability, there exists a proper $(k+1)$-coloring $\mathcal{C}'$ of $G'$ such that $\mathcal{C}'(v) \in L(v)$ for all $v \in V(G')$.
\end{proof}

\begin{algorithm}[t]
\caption{One-pass semi-streaming algorithm for $k$-partial $(k+1)$-coloring}
\label{alg:streaming-partial-coloring}
\begin{algorithmic}[1]

\State \textbf{Input:}\  Graph $G=(V,E)$ given as an insertion-only stream, integer $k$
\State \textbf{Output:}\ A $k$-partial $(k+1)$-coloring of $G$ (with high probability)

\State Let $s := C\log^2 n$ for a sufficiently large constant $C$.  For each $v\in V$, sample a list $L(v)$ of $s$ colors independently and uniformly at random from $[k+1]$ (with replacement)
	\State Initialize counters $\deg(v)\gets 0$ and $\free(v)\gets 0$ for all $v\in V$
\State Initialize an empty graph $H$

\For{each streamed edge $\{u,v\}\in E$}
    \If{$\deg(u) < k$ \textbf{or} $\deg(v) < k$}
        \State $\deg(u) \gets \deg(u) + 1$
        \State $\deg(v) \gets \deg(v) + 1$
        \If{$L(u)\cap L(v)=\emptyset$}
	    \State $\free(u)\gets \free(u)+1$
            \State $\free(v)\gets \free(v)+1$
        \Else
            \State Store the edge $\{u,v\}$ in $H$
        \EndIf
    \EndIf
\EndFor

\For{each $v\in V$}
    \State $\dem(v) \gets \max\bigl\{0,\min\{k,\deg(v)\}-\free(v)\bigr\}$
\EndFor

	\State \textbf{Offline:} find (by brute force) a coloring $\chi: V\rightarrow [k+1]$ which is a solution for the demand-partial list coloring instance $(H,\dem)$ such that $\chi(v)\in L(v)$ for all $v\in V$.

\State \Return $\chi$
\end{algorithmic}
\end{algorithm}


\subsection{Algorithm}\label{sec:algorithm}\label{subsec:algorithm}

We now describe our one-pass semi-streaming algorithm for computing a
	$k$-partial $(k+1)$-coloring. Before describing the algorithm, we first define \emph{demand-partial list coloring}.

	\begin{definition}[Demand-partial list coloring.]
Let $H=(V,F)$ be a graph and let each vertex $v\in V$ have a list $L(v)\subseteq [k+1]$. Let $\dem:V\to [k+1]$ be a demand function.
A coloring $\chi:V\to [k+1]$ is a \emph{solution to the demand-partial list coloring instance $(H,\dem)$}
if
\begin{enumerate}
	\item $\chi(v)\in L(v)$ for every $v\in V$, and
	\item  for every vertex $v\in V$, $\bigl|\{\,x\in N_H(v)\ :\ \chi(x)\neq \chi(v)\,\}\bigr|\ \ge\ \dem(v)$.

\end{enumerate}
In other words, $\dem(v)$ specifies how many neighbors of $v$ in $H$ must end up with a color
different from $\chi(v)$; we do not require $\chi$ to be a proper coloring of $H$.
\end{definition}

\subsubsection{Description of the Algorithm}
The algorithm knows the parameter $k$ apriori and the input is an insertion-only stream of edges of a $n$-vertex graph  $G=(V,E)$.
Let $s:=C\log^2 n$ for a sufficiently large constant $C$.
Before the stream starts, each vertex $v\in V$ independently samples a list $L(v)$ of $s$ colors
from $[k+1]$ uniformly with replacement.
For each vertex $v\in V$, the algorithm maintains two counters $\deg(v)$ and $\free(v)$ both initialized at $0$. 


	The algorithm has three phases. As an edge $\{u,v\}$ streams in, the algorithm processes it in two phases. The third phase is  post-processing  where the final coloring of the graph is computed.

	\begin{description}
		\item[Phase 1 : Filtering.] if $\deg(u)<k$ or $\deg(v)<k$, the edge $\{u,v\}$ proceeds to the next phase and the degrees are updated as follows :
			\[
				\deg(u)\gets \deg(u)+1,
				\qquad
				\deg(v)\gets \deg(v)+1.
			\]
			If $\deg(u)\ge k$ and $\deg(v)\ge k$, the edge is ignored and no state changes.

		\item[Phase 2 : Sparsification.] Conditioned on passing the filter, the algorithm checks whether the sampled lists are disjoint.
If $L(u)\cap L(v)=\emptyset$, then regardless of the eventual choices $\chi(u)\in L(u)$ and
$\chi(v)\in L(v)$, we necessarily have $\chi(u)\neq \chi(v)$.
Hence this edge does not need to be stored; instead we increment
\[
	\free(u)\gets \free(u)+1,
	\qquad
	\free(v)\gets \free(v)+1.
\]
Otherwise, if $L(u)\cap L(v)\neq \emptyset$, the edge may constrain the final solution and the
algorithm stores $\{u,v\}$.

		\item[Phase 3 : Post-Processing.] Let the graph stored after the end of the stream be $H = (V,F)$. For each vertex $v\in V$, the algorithm defines the demand as 
			\[
					\dem(v)\ \coloneqq\ \max\bigl\{0,\ \min\{k,\deg(v)\}-\free(v)\bigr\}.
			\]
			The algorithm computes offline the coloring $\chi : V\rightarrow [k+1]$ for the demand-partial list coloring $(H,\dem)$ where for each vertex $v\in V$, $\chi(v)\in L(v)$.
	\end{description}

\paragraph{\color{red}Remark.} The algorithm works in the insertion-only framework. The key challenge in adapting it to dynamic setup is the implementation of the filter in Phase~1. 
\subsubsection{Correctness}\label{subsec:correctness}
\begin{lemma}\label{lemma:witness-implies-partial}
Let $G'$ be the graph consisting of the edges that pass the filter of Algorithm~\ref{alg:streaming-partial-coloring}.
If $\chi$ is a $k$-partial coloring of $G'$,
then $\chi$ is a $k$-partial coloring of $G$.
\end{lemma}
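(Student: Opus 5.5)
The plan is to show that the filtered graph $G'$ retains enough edges at every vertex: for every $v\in V$,
\[
\min\{k,\deg_{G'}(v)\} \;\ge\; \min\{k,\deg_G(v)\}.
\]
Once this holds, the lemma is immediate. Let $\chi$ be a $k$-partial coloring of $G'$. Then $v$ has at least $\min\{k,\deg_{G'}(v)\}\ge \min\{k,\deg_G(v)\}$ neighbors in $G'$ with a color different from $\chi(v)$. Since $G'$ is a subgraph of $G$ on the same vertex set, these are also neighbors of $v$ in $G$ with a different color, so $\chi$ is a $k$-partial coloring of $G$.

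To prove the degree claim, I will use the counter $\deg(v)$ maintained in Phase~1. By construction it always equals the current degree of $v$ in the filtered graph, so at the end of the stream $\deg_{G'}(v)=\deg(v)$. Also, $\deg(v)$ never decreases. I then split into two cases.

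\textbf{Case 1: $\deg(v)$ reaches $k$ at some point.} Then $\deg_{G'}(v)\ge k$, so the left-hand side equals $k$, which is at least $\min\{k,\deg_G(v)\}$.

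\textbf{Case 2: $\deg(v)<k$ throughout the stream.} Then every edge $\{u,v\}\in E$ satisfies the filter condition ``$\deg(u)<k$ or $\deg(v)<k$'' when it arrives, because $\deg(v)<k$ at that moment. So every edge incident to $v$ is accepted, giving $\deg_{G'}(v)=\deg_G(v)$, and the inequality holds with equality.

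I do not expect a real obstacle. The only point that needs care is the claim in Case~2 that an edge incident to $v$ can never be rejected while $\deg(v)<k$. This follows directly from the filter being a disjunction, so the argument is an invariant check over the stream order rather than a probabilistic one. Note that the lemma concerns $k$-partial (not necessarily proper) colorings of $G'$, so the argument does not rely on Lemma~\ref{lemma:witness_graph}, although it is the same degree-preservation idea.
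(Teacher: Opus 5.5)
Your proposal is correct and follows essentially the same route as the paper: a two-case degree-preservation argument showing that either every edge incident to $v$ passes the filter or at least $k$ of them do, and then using $N_{G'}(v)\subseteq N_G(v)$. The only difference is cosmetic. You split on whether the counter $\deg(v)$ ever reaches $k$, whereas the paper splits on whether $\deg_G(v)<k$, and the two case splits are equivalent.
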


\begin{proof}
Fix an arbitrary vertex $v\in V$. There are two cases.

	\begin{description}
		\item[Case 1 : $\deg_G(v)<k$.] Each edge that arrives and is incident on it passes through the filter. Hence, $N_G(v) = N_{G'}(v)$ and $\chi$ colors both neighborhoods properly w.r.t $v$ and trivially satisfies the condition of valid $k$-partial coloring on $v$.
		\item[Case 2 : $\deg_G(v)\geq k$.] The first $k$ incident edges in the stream pass through the filter. Hence, $\deg_{G'}(v)\geq k$. Given that $N_{G'}(v)\subseteq N_{G}(v)$ and $\chi$ is a $k$-partial coloring of $G'$, at least $k$ vertices in $N_G(v)$ are also colored differently than $v$ in $G$ satisfying the condition of $k$-partial coloring.   
	\end{description}
\end{proof}

\begin{lemma}\label{lem:kpartial-implies-demand}
	If $\chi$ is a  $k$-partial coloring of $G'$ such that $\chi(x)\in L(x)$ for all $x\in V(G')$, then for every vertex $v\in V(G')$,
\[
\bigl|\{u\in N_H(v):\chi(u)\neq\chi(v)\}\bigr|\ \ge\ \dem(v).
\]
\end{lemma}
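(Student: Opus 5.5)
We argue by a direct counting argument on the edges of $G'$ incident to $v$. Fix a vertex $v$. Every edge of $G'$ incident to $v$ passed the filter, so it was counted in $\deg(v)$; after the stream ends, $\deg(v)=\deg_{G'}(v)$. In Phase~2 each such edge is handled in exactly one of two ways. Either $L(u)\cap L(v)=\emptyset$, in which case it increments $\free(v)$. Or it is stored in $H$. Hence $N_{G'}(v)$ is the disjoint union of $N_H(v)$ and a set $F_v$ of ``free'' neighbors with $|F_v|=\free(v)$. In particular $\deg_{G'}(v)=\deg_H(v)+\free(v)$.

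Next, because $\chi(x)\in L(x)$ for every $x$, every $u\in F_v$ satisfies $\chi(u)\in L(u)$ and $\chi(v)\in L(v)$ with $L(u)\cap L(v)=\emptyset$. Therefore $\chi(u)\neq\chi(v)$, so all $\free(v)$ free neighbors are automatically differently colored. Since $\chi$ is a $k$-partial coloring of $G'$, the number of neighbors of $v$ in $G'$ with a color different from $\chi(v)$ is at least $\min\{k,\deg_{G'}(v)\}=\min\{k,\deg(v)\}$. Splitting this count over the disjoint union gives
\[
\bigl|\{u\in N_H(v):\chi(u)\neq\chi(v)\}\bigr| + \free(v)\ \ge\ \min\{k,\deg(v)\}.
\]
Here we use that the free neighbors contribute at most $\free(v)$, and in fact exactly that many. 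So the number of differently colored $H$-neighbors is at least $\min\{k,\deg(v)\}-\free(v)$. It is also trivially at least $0$, hence at least $\dem(v)$.

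The only point needing care is the bookkeeping identity $\deg(v)=\deg_{G'}(v)$, together with the disjointness of $N_H(v)$ and $F_v$. This holds because the counter $\deg(v)$ is incremented exactly when an incident edge passes the filter. Each passing edge is then routed to exactly one of the two branches, and the graph is simple, so no edge is counted twice. I expect no real obstacle beyond stating this correspondence precisely.
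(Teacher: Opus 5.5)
Your proof is correct and follows essentially the same route as the paper. Both split $N_{G'}(v)$ into the stored neighbors $N_H(v)$ and the $\free(v)$ discarded neighbors, which are automatically differently colored because the lists are disjoint, and then subtract $\free(v)$ from the $k$-partial guarantee $\min\{k,\deg_{G'}(v)\}$. Your treatment of the $\max\{0,\cdot\}$ in $\dem(v)$, using that the count is trivially nonnegative, is actually cleaner than the paper's final chain, which as written asserts $a \ge \max\{0,a\}$.
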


\begin{proof}

	By the description of the algorithm, at the end of the filtering phase, for each vertex $v\in V(G')$, $\deg_{G'}(v) = \deg(v)$.

	Fix a vertex $v$. Split its neighborhood in $G'$ into stored and discarded parts:
\[
N_{G'}(v)=N_H(v)\ \cup\ \bigl(N_{G'}(v)\setminus N_H(v)\bigr).
\]
For every $u\in N_{G'}(v)\setminus N_H(v)$ we have $L(u)\cap L(v)=\emptyset$ by definition of $H$, and hence $\chi(u)\neq\chi(v)$.
Therefore,
\[
\bigl|\{u\in N_{G'}(v):\chi(u)\neq\chi(v)\}\bigr|
=
\free(v)\ +\ \bigl|\{u\in N_H(v):\chi(u)\neq\chi(v)\}\bigr|.
\]
Since $\chi$ is $k$-partial on $G'$, the left-hand side is at least $\min\{k,\deg_{G'}(v)\}$.
Rearranging yields
	\begin{align*}
\bigl|\{u\in N_H(v):\chi(u)\neq\chi(v)\}\bigr|
		&\ge \min\{k,\deg_{G'}(v)\}-\free(v)\\
		&\geq \max\{0,\min\{k,\deg_{G'}(v)\}-\free(v)\}
		= \dem(v).
	\end{align*}

\end{proof}

\begin{lemma}\label{lem:existence-demand-solution}
With high probability over the sampled lists $\{L(v)\}_{v\in V}$,
there exists a coloring $\chi:V\to[k+1]$ such that
\begin{itemize}
    \item $\chi(v)\in L(v)$ for all $v\in V$, and
    \item $\bigl|\{u\in N_H(v):\chi(u)\neq\chi(v)\}\bigr|\ge \dem(v)$ for all $v\in V$.
\end{itemize}
\end{lemma}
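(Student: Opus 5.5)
The plan is to chain together the results already established. The target coloring $\chi$ will be a proper coloring of a witness for $k$-partial coloring of the filtered graph $G'$, with every color drawn from the sampled lists. We then read off the demand condition on $H$ via Lemma~\ref{lem:kpartial-implies-demand}.

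First, let $G'$ be the graph of edges that pass the filter of Algorithm~\ref{alg:streaming-partial-coloring}. Note that $G'$ is determined by the stream order and $k$ alone. The filter never looks at the sampled lists $L(\cdot)$, so $G'$ is independent of the randomness in the lists. This lets us treat $G'$ as a fixed graph when invoking a palette sparsification statement.

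Second, let $G''$ be a witness for $k$-partial coloring of $G'$, in the sense of the Definition, obtained by running the deletion procedure on $G'$. Then apply Theorem~\ref{theorem:palette-sparsification-partial-coloring} to $G'$ and its witness $G''$ with the lists $L(v)$. These lists have size $s=C\log^2 n$, are sampled uniformly with replacement from $[k+1]$, and are independent of $G''$. With $C$ large enough that the split into $L_1,L_2$ used in that proof is possible, the theorem yields, with high probability, a proper $(k+1)$-coloring $\chi$ of $G''$ with $\chi(v)\in L(v)$ for all $v$.

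Third, by Lemma~\ref{lemma:witness_graph} applied to the pair $(G',G'')$, this $\chi$ is a $k$-partial coloring of $G'$. Since it also respects the lists, Lemma~\ref{lem:kpartial-implies-demand} gives
\[
\bigl|\{u\in N_H(v):\chi(u)\neq\chi(v)\}\bigr|\ge \dem(v)
\]
for every $v$, which is the claim.

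The step I expect to need the most care is the independence and measurability in the second step. The stored graph $H$ and the demands depend on the lists, whereas Theorem~\ref{theorem:palette-sparsification-partial-coloring} requires a graph fixed before sampling. The argument avoids this by applying the theorem only to $G''$, which depends solely on the stream. The dependence of $H$ on the lists enters only through Lemma~\ref{lem:kpartial-implies-demand}, and that lemma holds deterministically for every realization of the lists. A minor point to check is that the sizes $C_1\log n + C_2\log^2 n$ fit within $s$, which is fine for large $C$.
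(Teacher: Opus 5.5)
Your proposal is correct and follows the paper's proof. Both apply Theorem~\ref{theorem:palette-sparsification-partial-coloring} to a witness of the filtered graph $G'$, pass to a list-respecting $k$-partial coloring of $G'$, and conclude with Lemma~\ref{lem:kpartial-implies-demand}. Your citation of Lemma~\ref{lemma:witness_graph} for the middle step is the right one; the paper's text points to Lemma~\ref{lemma:witness-implies-partial}, which relates $G'$ to $G$. Your observation that $G'$, and hence its witness, does not depend on the sampled lists makes explicit a point the paper leaves implicit.
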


\begin{proof}
Let $W$ be any witness for $k$-partial coloring of $G'$. By Theorem~\ref{theorem:palette-sparsification-partial-coloring} applied to $W$,
with high probability there exists a proper $(k+1)$-coloring $\chi$ of $W$ such that $\chi(v)\in L(v)$ for all $v\in V$.

By Lemma~\ref{lemma:witness-implies-partial}, $\chi$ is a $k$-partial coloring of $G'$ and is list respecting.
Finally, Lemma~\ref{lem:kpartial-implies-demand} implies that $\chi$ satisfies the
demand constraints on $H$.
\end{proof}
\begin{theorem}\label{thm:algorithm-correctness}
Algorithm~\ref{alg:streaming-partial-coloring} outputs a $k$-partial $(k+1)$-coloring of $G$
with high probability.
\end{theorem}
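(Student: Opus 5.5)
The proof splits into two parts. First, with high probability the offline step of Algorithm~\ref{alg:streaming-partial-coloring} finds some valid coloring. Second, \emph{every} solution of the demand-partial list coloring instance $(H,\dem)$ is a $k$-partial $(k+1)$-coloring of $G$.

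\textbf{Existence.} By Lemma~\ref{lem:existence-demand-solution}, with high probability over the sampled lists there is a coloring $\chi$ with $\chi(v)\in L(v)$ that meets all demands on $H$. Since the offline step searches exhaustively over list-respecting colorings, it returns some solution $\chi^\ast$ whenever one exists. So it suffices to show that an arbitrary solution $\chi^\ast$ is valid for $G$. Note that this is the converse direction of Lemma~\ref{lem:kpartial-implies-demand}.

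\textbf{Soundness.} Fix a solution $\chi^\ast$ and a vertex $v$.
\begin{itemize}
\item For every discarded edge $\{u,v\}$ of $G'$ (counted in $\free(v)$) we have $L(u)\cap L(v)=\emptyset$. Since $\chi^\ast(u)\in L(u)$ and $\chi^\ast(v)\in L(v)$, this forces $\chi^\ast(u)\neq\chi^\ast(v)$.
\item Edges of $H$ and discarded edges are disjoint subsets of $E(G')$. Hence the number of $G'$-neighbours of $v$ colored differently from $v$ is at least $\free(v)+\dem(v)\ge \min\{k,\deg(v)\}$, using $\dem(v)\ge \min\{k,\deg(v)\}-\free(v)$.
\item We have $\deg(v)=\deg_{G'}(v)$, so $\chi^\ast$ is a $k$-partial coloring of $G'$.
\item Lemma~\ref{lemma:witness-implies-partial} then lifts this to a $k$-partial coloring of $G$.
\end{itemize}
The colors lie in $[k+1]$ because every list is a subset of $[k+1]$.

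\textbf{Main obstacle.} The delicate point is the existence step, Lemma~\ref{lem:existence-demand-solution}, which applies Theorem~\ref{theorem:palette-sparsification-partial-coloring} to a witness $W$ of $G'$. That theorem needs the lists to be sampled independently of $W$. This holds only because $G'$ is determined by the stream order and the filter alone, and the filter never looks at the lists. I would state this independence explicitly: condition on the stream, note that $G'$ and hence $W$ are fixed, and then apply the theorem with the fresh random lists. The "with high probability" guarantee is then the one from Lemma~\ref{lem:existence-demand-solution}.

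One further detail needs care. Lemma~\ref{lemma:witness-implies-partial} as stated concerns the filtered graph $G'$, whereas in Lemma~\ref{lem:existence-demand-solution} it is used for a witness $W$ of $G'$. For that step I would rely instead on Lemma~\ref{lemma:witness_graph} applied to $G'$, which shows that a proper coloring of $W$ is $k$-partial on $G'$.
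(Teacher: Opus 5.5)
Your proof is correct and follows the paper's argument. Existence comes from Lemma~\ref{lem:existence-demand-solution} plus exhaustive search; soundness counts the $\free(v)$ discarded edges (bichromatic because the lists are disjoint) together with $\dem(v)$ bichromatic stored edges to reach $\min\{k,\deg_{G'}(v)\}$; and Lemma~\ref{lemma:witness-implies-partial} lifts the result from $G'$ to $G$. Your two side remarks are also accurate and make explicit what the paper leaves implicit. First, the filter never reads the lists, so $G'$ and its witness $W$ are independent of the sampled lists. Second, inside Lemma~\ref{lem:existence-demand-solution}, passing from a proper coloring of $W$ to a $k$-partial coloring of $G'$ is really Lemma~\ref{lemma:witness_graph} applied to $G'$, not Lemma~\ref{lemma:witness-implies-partial}.
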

\begin{proof}
By Lemma~\ref{lem:existence-demand-solution}, with high probability there exists at least one list-respecting coloring $\chi$ that satisfies the demand constraints on $H$. The offline brute-force step finds such a coloring whenever it exists.

Assume the brute-force step outputs such a $\chi$. Then, by construction, $\chi$ satisfies the demand constraints on $H$.

	Fix an arbitrary vertex $v\in V$. Since every discarded edge of $G'$ in the sparsification step of the algorithm  has disjoint lists, every neighbor of $v$ which contributes to  $\free(v)$ is colored differently than it by $\chi$.   
	
		 \begin{align*}
			 |\{u\in N_{G'}(v) : \chi(u)\neq \chi(v)\}|&\geq \free(v) + \dem(v)\\
			 &\geq \free(v) + \max\{0,\min\{k,\deg_{G'}(v)\} -\free(v)\} \geq \min\{k,\deg_{G'}(v)\}	
		 \end{align*}
	Therefore, the constraints of $k$-partial coloring is satisfied by $\chi$ on $G'$ and by Lemma~\ref{lemma:witness-implies-partial}, $\chi$ is a $k$-partial coloring of $G$.
\end{proof}
\subsubsection{Space Guarantee}\label{subsec:space-guarantee}

\begin{lemma}\label{lemma:degenerate}
Let $G'=(V,E')$ be the graph induced by the edges that pass the filter of
Algorithm~\ref{alg:streaming-partial-coloring}. Then $G'$ is $k$-degenerate.
\end{lemma}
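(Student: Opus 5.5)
The plan is to exhibit an orientation of $G'$ in which every vertex has outdegree at most $k$, using the equivalent characterization of degeneracy from Section~\ref{section:prelims}; equivalently, we show every nonempty subgraph of $G'$ has a vertex of degree at most $k$. The natural orientation is charging each accepted edge to an endpoint that was ``responsible'' for letting it through the filter.

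Concretely, when an edge $\{u,v\}$ arrives and passes the filter, at least one endpoint, say $u$, has counter $\deg(u)<k$ at that moment. We orient the edge out of such an endpoint (break ties arbitrarily, e.g. prefer $u$). Since the counter $\deg(u)$ is incremented exactly when an edge incident to $u$ passes the filter, and the counter only grows, each vertex $x$ can be charged at most $k$ times: the charges to $x$ occur only while $\deg(x)<k$, i.e. at the moments when its counter takes values $0,1,\dots,k-1$ before incrementing, and each such value occurs at most once. Hence every vertex has outdegree at most $k$ in this orientation.

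The remaining step, which I expect to be the only subtle point, is converting a bounded-outdegree orientation into degeneracy, since an arbitrary orientation with outdegree $\le k$ only gives average degree $\le 2k$, not degeneracy $k$. To fix this, I would use the arrival order itself: orient each edge toward the endpoint that is charged, and note the charged endpoint $u$ had fewer than $k$ accepted incident edges \emph{before} this one. Instead, order vertices by the time their counter reaches $k$ (vertices never reaching $k$ placed last, arbitrarily). For any vertex $v$, its edges to vertices later in this order: consider an edge $\{u,v\}$ with $u$ later than $v$. If it arrived after $v$ saturated, then $u$ was unsaturated at arrival, so it is charged to $u$; otherwise it is among the first $k$ accepted edges of $v$. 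Thus $v$'s forward neighbors that arrived after $v$'s saturation must be unsaturated then — this is consistent but doesn't bound them by $k$ directly, so the cleaner route is the subgraph argument: take any nonempty subgraph $F\subseteq G'$ and let $v\in V(F)$ be the vertex that saturated \emph{last} (or never). Every edge of $F$ at $v$ is either among $v$'s first $k$ accepted edges, or arrived after $v$ saturated, in which case the other endpoint $w$ was unsaturated then, so $w$ saturates later than $v$ or never, contradicting the choice of $v$ unless $w$ also never saturates — which I would handle by choosing $v$ among never-saturated vertices first, where $\deg_{G'}(v)<k$ trivially. In all cases $\deg_F(v)\le k$, giving $k$-degeneracy.
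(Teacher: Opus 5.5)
Your final subgraph argument is correct, and it takes a different route from the paper's.

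The paper fixes $S$ and works with degrees \emph{inside} $G'[S]$. Once every vertex of $S$ has at least $k$ neighbors in $G'[S]$, both endpoints of any later edge inside $S$ have counters at least $k$, so the filter rejects it. Hence $G'[S]$ cannot end with minimum degree $k+1$.

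You instead use a global potential, the time $t_v$ at which $v$'s counter reaches $k$. A never-saturated vertex has $\deg_{G'}(v)<k$. Otherwise, the vertex of $F$ with maximal $t_v$ has all its $F$-edges among its first $k$ accepted edges. The reason is that an edge $\{v,w\}$ accepted after $t_v$ needs $w$'s counter below $k$ at that moment, which forces $t_w>t_v$. Ties are harmless for the same reason.

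What your version buys is a single explicit degeneracy ordering, determined by the stream alone and valid for all subgraphs at once. Put never-saturated vertices first, then the rest by decreasing $t_v$, with ties broken arbitrarily. In this order each vertex has at most $k$ later neighbors.

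In the same vein, your charging orientation from the second paragraph becomes acyclic with outdegree at most $k$ if each edge is charged to whichever endpoint comes first in this order. That endpoint indeed always had counter below $k$ when the edge arrived. A fixed orientation of this kind, independent of the random lists, is exactly what Lemma~\ref{lem:H-size} relies on. The paper's argument, by contrast, is more local and needs no saturation times.

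The two exploratory attempts you abandon can simply be cut:
\begin{itemize}
\item An arbitrary charging orientation only gives $2k$-degeneracy.
\item In the increasing saturation order, a vertex can gain unboundedly many forward neighbors after saturating.
\end{itemize}
The correct order is the reverse of the one you first tried.
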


\begin{proof}
We use the equivalent characterization: a graph is $k$-degenerate if and only if
every non-empty subgraph has a vertex of degree at most $k$.

Let $S\subseteq V$ be any non-empty set of vertices and consider the induced subgraph
$G'[S]$. Let $t$ be the earliest time (in the stream) at which an edge with both
endpoints in $S$ is accepted \emph{after} every vertex in $S$ has already reached
degree at least $k$ in $G'[S]$. We claim that such a time $t$ cannot exist.

Indeed, fix any time during the stream and suppose (for contradiction) that at this
time every vertex $x\in S$ has current degree at least $k$ \emph{within} $G'[S]$.
Then, in particular, each endpoint of any arriving edge $\{u,v\}$ with $u,v\in S$
has current degree at least $k$ in the full graph $G'$ as well (since
$\deg_{G'}(x)\ge \deg_{G'[S]}(x)$ for all $x\in S$ at all times).
Therefore the filter condition ``$\deg_{G'}(u)<k$ or $\deg_{G'}(v)<k$'' fails, and
no further edge with both endpoints in $S$ can ever be accepted into $G'$.
This contradicts the definition of $t$.

Hence, at the end of the stream it cannot be the case that all vertices in $S$ have
degree at least $k+1$ in $G'[S]$. Equivalently, $G'[S]$ contains a vertex of degree
at most $k$. Since $S$ was arbitrary, every non-empty induced subgraph of $G'$ has
a vertex of degree at most $k$, and therefore $G'$ is $k$-degenerate.
\end{proof}



\begin{lemma}\label{lem:H-size}
Let $s=C\log^2 n$. With high probability,
\[
|E(H)| = O(n\log^4 n).
\]
\end{lemma}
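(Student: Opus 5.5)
The plan is to combine the degeneracy of the filtered graph (Lemma~\ref{lemma:degenerate}) with a per-edge bound on how likely two sampled lists are to intersect. Orient $G'$ along a degeneracy ordering, so that every vertex has out-degree at most $k$ and $|E(G')|\le kn$. Every edge of $H$ is an edge of $G'$ whose endpoint lists intersect. For a fixed edge $\{u,v\}$, the union bound over the $s^2$ pairs of draws gives
\[
\Pr[L(u)\cap L(v)\neq\emptyset]\le s^2/(k+1).
\]
Summing over the oriented edges, the expected out-degree of $v$ in $H$ is at most $k\cdot s^2/(k+1)\le s^2=C^2\log^4 n$. Hence $\mathbb{E}|E(H)|=O(n\log^4 n)$.

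To upgrade this to a high-probability statement, I will bound each vertex's out-degree in $H$ and take a union bound over vertices. The subtlety is that $G'$ itself depends on the stream, but not on the lists: the filter uses only degree counters, not $L(\cdot)$. So $G'$, and hence the degeneracy orientation, is determined by the adversarial (oblivious) stream independently of the randomness. We may therefore fix $G'$ and its orientation first and then reveal the lists.

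Fix $v$ and condition on $L(v)$. Because the out-neighbours of $v$ have independent lists, the indicators $\mathbf{1}[L(u)\cap L(v)\neq\emptyset]$ over out-neighbours $u$ are independent. Each has probability at most $|L(v)|\cdot s/(k+1)\le s^2/(k+1)$. The out-degree of $v$ in $H$ is thus dominated by a sum of at most $k$ independent Bernoullis with total mean at most $s^2=\Theta(\log^4 n)$. The upper-tail Chernoff bound (the standard counterpart of the lower-tail bound stated in Section~\ref{section:prelims}) gives $\Pr[X\ge 2s^2]\le \exp(-s^2/3)=n^{-\omega(1)}$. A union bound over the $n$ vertices then shows that, with high probability, every vertex has out-degree $O(\log^4 n)$ in $H$. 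Summing out-degrees counts every edge of $H$ exactly once, so $|E(H)|=O(n\log^4 n)$.

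I expect the main obstacle to be the independence argument in the previous paragraph. One must be careful that the set of edges passing the filter does not depend on the random lists, since only then are the Chernoff indicators legitimately independent given $L(v)$. The plan also uses that the orientation gives each edge a unique tail, so that the per-vertex bound sums correctly to the total edge count. Finally, the space of the algorithm is $O(n\log^4 n)$ words for $H$, plus $O(ns\log k)$ bits for the lists and $O(n\log n)$ bits for the counters, which is $\widetilde{O}(n)$.
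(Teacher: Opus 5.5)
Your proposal is correct and follows essentially the same argument as the paper: orient $G'$ along a $k$-degeneracy ordering, bound $\Pr[L(u)\cap L(v)\neq\emptyset]\le s^2/(k+1)$, use independence across out-neighbours conditioned on $L(v)$, and finish with a Chernoff bound and a union bound over vertices. Your observation that the filter depends only on the degree counters, so $G'$ can be fixed before the lists are revealed, spells out a step the paper uses without stating, and it is needed for the independence claim to hold.
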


\begin{proof}
	Since the graph $G'$ is $k$-degenerate (by Lemma~\ref{lemma:degenerate}), fix a $k$-degenerate ordering of $G'$ and define acyclic orientation of the edges such that  every vertex $v$ satisfies $\outdeg_{G'}(v)\le k$.

	Fix a vertex $v$. For each out-neighbor $u$ of $v$ in $G'$, the edge $\{v,u\}$ is stored in $H$
iff $L(v)\cap L(u)\neq \emptyset$. Conditioned on $L(v)$, these events are independent across
different out-neighbors $u$, and
\[
\Pr[L(v)\cap L(u)\neq \emptyset]\le \frac{s^2}{k+1}.
\]
Hence $\outdeg_H(v)$ is stochastically dominated by a binomial random variable with $k$ trials and mean at most $s^2=O(\log^4 n)$. A Chernoff bound implies that $\outdeg_H(v)=O(\log^4 n)$ with high probability.
A union bound over all $v\in V$ shows that with high probability, every vertex $v$ has $\outdeg_H(v)=O(\log^4 n)$ in this fixed orientation. Therefore $H$ is $O(\log^4 n)$-degenerate  and hence
$|E(H)|\le n\cdot O(\log^4 n)$ with high probability.
\end{proof}




\section{Lower Bound: A Super-Linear Space Barrier}\label{sec:lower-bound}

We establish that verifying a $k$-partial $k$-coloring requires super-linear space in the semi-streaming model. This stands in sharp contrast to $(k+1)$-coloring, which we have shown is tractable with $\tilde{O}(n)$ space. Specifically, we prove the following theorem.

\begin{theorem}\label{thm:lower_bound}
    For every constant $\varepsilon \geq 0$, let $G$ be an $n$-vertex graph presented as an adversarial insertion-only stream, and let the partial coloring parameter be $k = \Theta(n^{\frac{1}{3+\varepsilon}})$.
    Any one-pass randomized streaming algorithm $\mathcal{A}$ that solves $k$-partial $k$-coloring on $G$ with probability at least $2/3$ requires $\Omega\left(n^{1 + \frac{1}{3+\varepsilon}}\right)$
    bits of memory. In particular, for $\varepsilon=0$, the algorithm requires $\Omega(n^{4/3})$ bits.
\end{theorem}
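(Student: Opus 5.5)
We reduce from \ind, following the blueprint in the technical overview. Alice receives $X\in\{0,1\}^N$ with $N=(k-1)\ell$ and reshapes it into a matrix $A\in\{0,1\}^{(k-1)\times\ell}$. Bob's index becomes a pair $(t,q)$. The reduction then proceeds in three steps.

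\textbf{Step 1: gadgets.} Following~\cite{BDGJ25}, we use a constant-size (in fact $O(k)$-size) $k$-edge gadget. Placed between vertices $x,y$, it forces $\chi(x)\neq\chi(y)$ in every $k$-partial $k$-coloring. The gadget is built from nearly-cliques of degree-exactly-$k$ vertices: a vertex of degree exactly $k$ must see all $k$ neighbors differently colored, so with only $k$ colors its closed neighborhood is rainbow. Chaining such pieces also gives a \emph{color repeater}, which forces $\chi(x)=\chi(y)$. We must verify three things about these gadgets: they are satisfiable in the intended way, they only add $O(k)$ vertices, and they do not change the degrees of their endpoints beyond what we account for.

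\textbf{Step 2: Alice's graph.} Alice builds row vertices $r_1,\dots,r_{k-1}$ and column vertices $c_1,\dots,c_\ell$. She adds $\{r_i,c_j\}$ iff $A[i,j]=1$. She then pads each $c_j$ with gadget-forced neighbors so that $\deg(c_j)=k-1$. The padding can be a shared ``palette clique'' $K$ on $k$ vertices, whose colors are forced to be all of $[k]$, connected through repeaters. The padding is arranged so that $N[c_j]$ must see all $k$ colors, except that whenever $A[i,j]=1$, the color of $r_i$ is the one supplied by the edge rather than by padding. To save vertices, Alice should reuse a single palette clique and encode ``$c_j$ must differ from color $a$'' by edge gadgets to the clique vertex colored $a$, not by fresh gadgets per column. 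Otherwise the count becomes $\ell\cdot k$ vertices rather than $O(k+\ell)$, and the padding needed per column is what pushes the exponent toward $k^3$. Since every vertex has degree at most $k$, the degree condition $\min\{k,\deg\}$ means every such vertex must be properly colored. Vertex count: using the shared clique and per-row gadgets, we aim for $n=O(k^3+\ell)$. Hence the choice $\ell=\Theta(k^{3+\varepsilon})$ gives $n=\Theta(\ell)$ and $k=\Theta(n^{1/(3+\varepsilon)})$.

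\textbf{Step 3: Bob's part and decoding.} Bob attaches repeaters and edge-gadgets tying $r_1,\dots,r_{k-1}$ to fixed distinct palette colors. In particular he forces $\chi(N[r_t])=[k-1]$, so that $r_t$ and $c_q$ are compelled into a relationship where $\chi(r_t)\ne\chi(c_q)$ iff $A[t,q]=1$. Concretely, Bob forces $c_q$'s neighborhood, excluding $r_t$, to use exactly the colors other than one designated color $a$, while $r_t$ is forced to $a$ unless the edge $r_tc_q$ is present. We prove both directions. First, a valid $k$-partial $k$-coloring exists for either bit value, by an explicit coloring. Second, every valid coloring reveals the bit. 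Bob runs $\mathcal A$ on Alice's transmitted state followed by his edges, then compares $\chi(r_t)$ with $\chi(c_q)$. A streaming algorithm with space $S$ therefore yields a one-way protocol for \ind with cost $S$, so $S=\Omega(N)=\Omega(k\ell)=\Omega(n^{1+1/(3+\varepsilon)})$.

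The main obstacle is Step 2 and Step 3's gadget accounting. We need the forcing to be rigid, meaning every $k$-partial $k$-coloring of the composed graph behaves as intended. This requires that every gadget vertex and every $c_j$ have degree at most $k$, so properness is forced locally, while $r_i$ may have large degree and must still be constrained. At the same time the total number of vertices must be only $O(k^3+\ell)$. I would first try to make all column vertices share one palette clique and place all gadget overhead on the $k-1$ rows, each row costing $O(k^2)$ gadget vertices. This gives the $k^3$ term.
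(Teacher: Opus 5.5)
Your frame matches the paper: \textsc{Index} on a $(k-1)\times\ell$ matrix, $\ell=\Theta(k^{3+\varepsilon})$, $n=\Theta(k^3+\ell)$, $S=\Omega(k\ell)$. However, the construction that carries the reduction is never pinned down, and the concrete mechanism you state in Step~3 fails.

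You ask that $N(c_q)\setminus\{r_t\}$ use exactly $[k]\setminus\{a\}$ while $\chi(r_t)=a$. Since $\deg(c_q)\le k$, every edge at $c_q$ must be bichromatic. So if this held when $A[t,q]=1$, then $c_q$ would see all $k$ colors, and the instance would have no $k$-partial $k$-coloring, breaking the promise. With $\deg(c_q)=k-1$ it cannot even hold in that case, since $N(c_q)\setminus\{r_t\}$ has only $k-2$ vertices. Your clause ``$r_t$ is forced to $a$ unless the edge $r_tc_q$ is present'' also cannot be implemented, because Bob chooses his gadgets without knowing $A[t,q]$.

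The missing idea is a complementary encoding. Every column gets exactly one neighbor per row: $r_i$ if $A[i,j]=1$, and a fixed substitute $s_i$ if $A[i,j]=0$. Thus $\deg(c_j)=k-1$ regardless of $A$. Bob then forces $\chi(s_i)=\chi(r_i)$ for $i\neq t$, but $\chi(s_t)=\alpha$, the unique color absent from the rainbow set $\{\chi(r_1),\dots,\chi(r_{k-1})\}$. Now $c_q$ sees $\{\chi(r_i):i\neq t\}$ plus either $\chi(r_t)$ or $\alpha$, and is forced to the other one. Both cases are colorable, and $\chi(c_q)=\chi(r_t)$ iff $A[t,q]=0$.

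This is exactly the paper's $U,V,W$ construction with $s_i=v_i$:
\begin{itemize}
\item Alice makes $U$ and $V$ rainbow with pairwise edge-gadgets.
\item Bob adds $c_{u_i,v_i}$ for $i\neq g$ and a single $e_{u_g,v_g}$.
\item Lemma~\ref{lem:structure} then gives $\chi(v_g)=\alpha$.
\end{itemize}
Your shared palette can serve as the substitutes only under two conditions. Column $j$ must be joined by a direct edge to $p_i$ whenever $A[i,j]=0$. Bob must tie $r_i$ to $p_i$ for $i\neq t$ but $r_t$ to $p_k$. ``Tying $r_1,\dots,r_{k-1}$ to fixed distinct palette colors'' with the natural choice $\chi(r_i)=\chi(p_i)$ for all $i$ makes every column's color independent of $A$.

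The vertex accounting also has to change. Encoding ``$c_j$ must differ from color $a$'' by an edge-gadget per column costs $k$ fresh vertices per gadget and up to $k-1$ gadgets per column. That gives $n=\Theta(k^2\ell)$ and $N=(k-1)\ell=O(n/k)$, so no superlinear bound survives. Columns must instead be attached by direct edges to the $O(k)$ row-side vertices.

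Those row-side vertices have degree far above $k$. So, contrary to your claim that ``every vertex has degree at most $k$'', they are not forced to be properly colored, and a plain clique on them is not forced to be rainbow. Their distinctness must come from pairwise edge-gadgets. These $\Theta(k^2)$ gadgets of size $O(k)$ are the source of the $k^3$ term, not any per-column padding.

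Finally, a color repeater is not obtained by chaining inequality gadgets: $x\neq z\neq y$ does not give $x=y$ once $k\ge 3$. Take instead a $(k-1)$-clique joined to both $x$ and $y$. Its vertices have degree exactly $k$, so the clique is rainbow and both endpoints are forced to the single remaining color.
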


ur proof is a reduction from the one-way \textsc{Index} communication problem. Before proceeding with the proof, we construct two ``gadgets" that force equality or inequality of colors between two vertices using the constraints of $k$-partial coloring and  will be used in the construction of lower bound graphs for the reduction.

\paragraph{Gadgets.}
For a graph $G$ and two distinct vertices $x,y\in V(G)$, 

\begin{itemize}
\item \textbf{Edge-gadget $e_{x,y}$.} Introduce a $k$-clique $K=\{p_1,\dots,p_k\}$; connect $x$ to $p_1,\dots,p_{k-1}$ and connect $y$ to $p_k$.
\item \textbf{Color-repeater $c_{x,y}$.} Introduce a $(k\!-\!1)$-clique $Q=\{q_1,\dots,q_{k-1}\}$ and connect both $x$ and $y$ to all vertices of $Q$.
\end{itemize}

\begin{lemma}[Edge-gadget forces inequality~\cite{BDGJ25}]\label{lemma:edge-gadget}
In any $k$-partial $k$-coloring $\chi$ of $G\cup e_{x,y}$, we have $\chi(x)\neq \chi(y)$.
\end{lemma}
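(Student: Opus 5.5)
The plan is to analyse the clique vertices of the gadget and show that the clique $K$ forces a proper coloring on itself, which in turn pins down the colors seen by $x$ and $y$. We assume the gadget vertices $p_1,\dots,p_k$ are fresh, so their only neighbors are those in the gadget. In that case $\deg(p_i)=(k-1)+1=k$ for every $i$: each $p_i$ has $k-1$ clique neighbors and exactly one outside neighbor ($x$ for $i\le k-1$, and $y$ for $i=k$). The $k$-partial condition with $\min\{k,\deg(p_i)\}=k$ then says that every neighbor of $p_i$ must receive a color different from $\chi(p_i)$. In other words, all edges incident to each $p_i$ are bichromatic.

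The first consequence is that $K$ is properly colored. A $k$-clique properly colored with only $k$ colors must use every color of $[k]$ exactly once, so $\chi$ restricted to $K$ is a bijection onto $[k]$.

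The second step uses the outside edges. Since the edges $\{x,p_i\}$ for $i\le k-1$ are bichromatic, $\chi(x)\notin\{\chi(p_1),\dots,\chi(p_{k-1})\}$. These are $k-1$ distinct colors, so the only remaining option is $\chi(x)=\chi(p_k)$. Likewise, the edge $\{y,p_k\}$ is bichromatic, so $\chi(y)\neq\chi(p_k)=\chi(x)$, which is the desired inequality.

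The only delicate point is justifying that $\deg(p_i)=k$ exactly, so that the demand at $p_i$ is full. This relies on the gadget being attached with new vertices, as the construction specifies; no edges of $G$ touch $K$. Everything else in the argument is immediate from pigeonhole.
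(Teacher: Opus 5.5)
Your proposal is correct and follows essentially the same argument as the paper. You use $\deg(p_i)=k$ to force every edge at the clique to be bichromatic, conclude that $K$ uses all $k$ colors, deduce $\chi(x)=\chi(p_k)$ by pigeonhole, and then get $\chi(y)\neq\chi(p_k)$. Your explicit note that the gadget vertices must be fresh, so that $\deg(p_i)$ is exactly $k$, is a point the paper states without comment.
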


\begin{proof}
Consider the clique $K = \{p_1, \dots, p_k\}$ within the edge gadget. Every vertex $p \in K$ has degree exactly $k$ in $G \cup e_{x,y}$. Therefore, by the definition of $k$-partial coloring, $\chi$ is a proper coloring on $e_{x,y}$. Since $K$ induces a $k$-clique and the palette has size exactly $k$, the vertices of $K$ must utilize all $k$ distinct colors. Consequently, the set of colors assigned to the subset $\{p_1, \dots, p_{k-1}\}$ consists of exactly $k-1$ distinct colors. 
    
	As the vertex $x$ is adjacent to every vertex in $\{p_1, \dots, p_{k-1}\}$, the coloring $\chi(x)$ must be the unique remaining color in the palette, which is precisely $\chi(p_k)$. Since $y$ is adjacent to $p_k$, we must have $\chi(y) \neq \chi(p_k)$ and therefore, $\chi(x)\neq \chi(y)$
\end{proof}

\begin{lemma}[Color-repeater forces equality]\label{lemma:color-repeater}
In any $k$-partial $k$-coloring $\chi$ of $G\cup c_{x,y}$, we have $\chi(x)=\chi(y)$.
\end{lemma}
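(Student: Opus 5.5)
The plan mirrors the proof of Lemma~\ref{lemma:edge-gadget}. The whole argument rests on the degrees of the clique vertices $q_i$, which leave no slack.

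\textbf{Step 1: degrees of $Q$.} Each $q_i\in Q$ is adjacent to the other $k-2$ vertices of $Q$, and to $x$ and $y$. I assume, as in the edge-gadget, that the gadget vertices are fresh and receive no other edges. Then $\deg(q_i)=k$. The $k$-partial condition requires at least $\min\{k,\deg(q_i)\}=k$ neighbours of $q_i$ to be coloured differently from $q_i$. Hence every neighbour of $q_i$ differs from $q_i$, so every edge incident to $Q$ is bichromatic. In particular $Q$ is properly coloured, and $\chi(x)\neq\chi(q_i)$ and $\chi(y)\neq\chi(q_i)$ for all $i$.

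\textbf{Step 2: pigeonhole.} Since $Q$ is a $(k-1)$-clique coloured properly, its vertices use exactly $k-1$ distinct colours out of the palette $[k]$. Exactly one colour $c^\ast\in[k]\setminus\chi(Q)$ remains. Both $x$ and $y$ are adjacent to all of $Q$ and must avoid all colours of $Q$, so $\chi(x)=c^\ast=\chi(y)$.

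\textbf{Main point to check.} The only delicate step is the degree count in Step~1. Each $q_i$ must have degree exactly $k$ (not more), so that the partial condition forces all of its incident edges to be proper. This holds because $x\neq y$ are distinct and $Q$ consists of new vertices attached only to $x$, $y$ and each other. No restriction on $x$ or $y$ themselves is needed; their degrees may be arbitrary.
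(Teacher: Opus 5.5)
Your proof is correct and follows the same route as the paper. Each $q_i$ has degree exactly $k$, which forces every edge of the gadget to be bichromatic. So $Q$ uses $k-1$ distinct colours, and $x$ and $y$ are both forced onto the single remaining colour of $[k]$. Your explicit degree count ($k-2$ clique neighbours plus $x$ and $y$, with fresh gadget vertices) is slightly more careful than the paper, which simply refers back to the edge-gadget argument.
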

\begin{proof}
	As the same arguments in lemma~\ref{lemma:edge-gadget}, the coloring $\chi$ is a proper on $c_{x,y}$ and the $(k-1)$-clique $Q$ utilizes $k-1$ distinct colors and therefore, vertices $x$ and $y$ have only one possible color that can be assigned to them and hence, $\chi(x) = \chi(y)$. 
\end{proof}

\subsection{The Reduction from \textsc{Index}}

Let Alice hold a boolean string $X \in \{0,1\}^N$ and Bob hold an index $j \in [N]$. We interpret $X$ as a matrix $A$ of dimensions $(k-1) \times \ell$, where $N = (k-1)\ell$. Bob's index corresponds to a specific entry $(g, h)$ in this matrix.

Alice and Bob create graph $G = (V(G),E_A\sqcup E_B)$ based on $A$ and $(g,h)$ respectively. We now describe the graph $G$ as follows.

\paragraph{Vertex Set.}

The vertex set $V(G)$ consists of four disjoint sets of vertices  $U = \{u_1,\ldots,u_{k-1}\}$, $V = \{v_1,\ldots,v_{k-1}\},$ $W = \{w_1,\ldots, w_{\ell}\}$ and $X$ which is a set of auxiliary vertices of cardinality $O(k^3)$ which would be used to create edge-gadgets or repeater gadgets. 

\smallskip
\noindent\textbf{Alice’s stream.}
\begin{enumerate}
	\item For each distinct pair of vertices $u_i$ and $u_{i'}$ with $i\neq i'$, insert the edge-gadget $e_{u_i,u_{i'}}$.
        \item For each distinct pair of vertices $v_i$ and $v_{i'}$ with $i\neq i'$, insert the edge-gadget $e_{v_i,v_{i'}}$.
	\item For every $i\in[k-1]$ and $h\in[\ell]$:
		If $A[i,h] = 1$, insert the edge  $\{u_i,w_h\}$, else insert the edge $\{v_i,w_h\}$. 
\end{enumerate}

\smallskip
\noindent\textbf{Bob’s stream.} For every $t\in[k-1]\setminus\{g\}$, insert the color-repeater $c_{u_t,v_t}$ and  insert the edge-gadget $e_{u_g,v_g}$.

Alice uses $O({k\choose 2})$ many edge-gadgets (for connecting each pair of vertices in $U$ and $V$ respectively) and Bob uses $(k-2)$ many color-repeaters and $1$ edge-gadget for his stream. They require $O(k^3)$ vertices to build these gadgets and utilize the vertices of $X$ to do so (which has sufficiently many vertices).  

\begin{lemma}\label{lem:structure}
Let $\chi$ be any $k$-partial $k$-coloring of the graph $G$ constructed above. Then the following properties hold:
\begin{enumerate}
    \item The vertices $u_1,\dots,u_{k-1}$ receive pairwise distinct colors, and the vertices $v_1,\dots,v_{k-1}$ receive pairwise distinct colors.
    \item For every $t\in[k-1]\setminus\{g\}$, we have $\chi(u_t)=\chi(v_t)$, while $\chi(u_g)\neq\chi(v_g)$.
    \item Let $\alpha$ denote the unique color in the palette $[k]$ that does not appear among $\{\chi(u_1),\dots,\chi(u_{k-1})\}$. Then $\chi(v_g)=\alpha$.
\end{enumerate}
\end{lemma}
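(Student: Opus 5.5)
Items 1 and 2 follow almost directly from the two gadget lemmas; item 3 needs one extra counting argument.

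\textbf{Item 1.} For each pair $i\neq i'$, Alice inserted the edge-gadget $e_{u_i,u_{i'}}$. Lemma~\ref{lemma:edge-gadget} applies to $G$ viewed as $G\cup e_{u_i,u_{i'}}$, so $\chi(u_i)\neq\chi(u_{i'})$. The same holds for the $v$'s.

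One point must be checked: the gadget lemmas rely on every clique vertex of a gadget having degree exactly $k$ (resp.\ the appropriate degree) in the \emph{whole} graph $G$. Several gadgets share endpoints such as $u_i$, but each gadget uses its own fresh auxiliary vertices from $X$. Hence the clique vertices $p_j$ (and $q_j$) have degree exactly $k$ in $G$: $k-1$ clique neighbours plus one external neighbour, or for $q_j$, $k-2$ clique neighbours plus $x,y$. So these vertices must be properly colored, and the gadget arguments go through unchanged.

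\textbf{Item 2.} Bob inserted $c_{u_t,v_t}$ for each $t\neq g$, so Lemma~\ref{lemma:color-repeater} gives $\chi(u_t)=\chi(v_t)$. He also inserted $e_{u_g,v_g}$, so Lemma~\ref{lemma:edge-gadget} gives $\chi(u_g)\neq\chi(v_g)$.

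\textbf{Item 3.} By item 1, $\{\chi(u_1),\dots,\chi(u_{k-1})\}$ has size $k-1$, so $\alpha$ is well defined and
\[
\{\chi(u_1),\dots,\chi(u_{k-1})\}=[k]\setminus\{\alpha\}.
\]
By item 2, the $k-2$ vertices $v_t$ with $t\neq g$ take exactly the colors $\{\chi(u_t): t\neq g\}=[k]\setminus\{\alpha,\chi(u_g)\}$.

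By item 1, $\chi(v_g)$ differs from all of these, so $\chi(v_g)\in\{\alpha,\chi(u_g)\}$. By item 2, $\chi(v_g)\neq\chi(u_g)$. Therefore $\chi(v_g)=\alpha$.

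The only step I expect to need care is the degree bookkeeping in item 1. The extra edges between $U\cup V$ and $W$ and the shared endpoints do not touch the gadget-internal vertices, which is why the gadget lemmas remain valid inside the full graph $G$.
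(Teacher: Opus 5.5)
Your proof is correct and matches the paper's argument step for step: items 1 and 2 come directly from Lemmas~\ref{lemma:edge-gadget} and~\ref{lemma:color-repeater}, and item 3 uses the same elimination, in which $\chi(v_g)$ avoids $\{\chi(u_t):t\neq g\}$ via the $v_t$'s and avoids $\chi(u_g)$ via the edge-gadget. You also check explicitly that each gadget uses fresh auxiliary vertices, so every clique vertex has degree exactly $k$ in the full graph $G$; the paper leaves this implicit, but it is exactly what makes the gadget lemmas applicable there.
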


\begin{proof}
The edge-gadgets inserted among the vertices of $U=\{u_1,\dots,u_{k-1}\}$ enforce $\chi(u_i)\neq\chi(u_{i'})$ for all $i\neq i'$ by Lemma~\ref{lemma:edge-gadget}. The same argument applies to the vertices of $V=\{v_1,\dots,v_{k-1}\}$, proving~(1).

For every $t\neq g$, the color-repeater $c_{u_t,v_t}$ enforces $\chi(u_t)=\chi(v_t)$ by Lemma~\ref{lemma:color-repeater}$,$ while the edge-gadget $e_{u_g,v_g}$ enforces $\chi(u_g)\neq\chi(v_g)$. This proves~(2).

Since $u_1,\dots,u_{k-1}$ receive pairwise distinct colors and the palette has size $k$, there is a unique color $\alpha$ not appearing among $\{\chi(u_1),\dots,\chi(u_{k-1})\}$. For every $t\neq g$, we have $\chi(v_t)=\chi(u_t)$ by part~(2), and hence the vertices $\{v_t : t\neq g\}$ use exactly the colors $\{\chi(u_t) : t\neq g\}$. As the vertices of $V$ must all receive distinct colors, the color $\chi(v_g)$ must differ from $\chi(v_t)$ for every $t\neq g$, and therefore must differ from all colors in $\{\chi(u_t) : t\neq g\}$. Moreover, since $\chi(u_g)\neq\chi(v_g)$ by part~(2), the color $\chi(v_g)$ must also differ from $\chi(u_g)$. It follows that $\chi(v_g)$ differs from every color in $\{\chi(u_1),\dots,\chi(u_{k-1})\}$ and  the only possible choice is the unique remaining color $\alpha$, and hence $\chi(v_g)=\alpha$ proving~(3).
\end{proof}

\begin{lemma}[Forced color at $w_h$]\label{lem:forced-color}
Let $\chi$ be any $k$-partial $k$-coloring of $G$, and let $\alpha$ be the unique
color in $[k]$ not appearing among $\{\chi(u_1),\dots,\chi(u_{k-1})\}$.
Then
\[
A[g,h]=1 \;\Longleftrightarrow\; \chi(w_h)=\chi(v_g)=\alpha.
\]
\end{lemma}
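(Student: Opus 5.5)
The plan is to exploit the fact that each column vertex $w_h$ has degree exactly $k-1<k$ in $G$. This forces $\chi$ to be proper at $w_h$, and the colors of its neighbors then pin down $\chi(w_h)$ uniquely.

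First I would check the degree claim. The vertex $w_h$ is touched only by step~3 of Alice's stream. No gadget (neither Alice's edge-gadgets nor Bob's repeaters and edge-gadget) is attached to any vertex of $W$. In step~3, for each $i\in[k-1]$ exactly one of the edges $\{u_i,w_h\}$ or $\{v_i,w_h\}$ is inserted. Hence $\deg_G(w_h)=k-1$, and its neighborhood is $N(w_h)=\{x_i : i\in[k-1]\}$, where $x_i=u_i$ if $A[i,h]=1$ and $x_i=v_i$ otherwise. Since $\min\{k,\deg_G(w_h)\}=k-1=\deg_G(w_h)$, the definition of a $k$-partial coloring requires all neighbors of $w_h$ to get colors different from $\chi(w_h)$.

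Next I would compute the color set $\chi(N(w_h))$ using Lemma~\ref{lem:structure}. For $i\neq g$, part~(2) gives $\chi(u_i)=\chi(v_i)$, so $\chi(x_i)=\chi(u_i)$ regardless of $A[i,h]$. The only index that depends on the matrix entry is $i=g$.

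\begin{itemize}
\item If $A[g,h]=1$, then $x_g=u_g$ and $\chi(N(w_h))=\{\chi(u_1),\dots,\chi(u_{k-1})\}$. By part~(1) this set has $k-1$ distinct colors. The only color in $[k]$ left for $w_h$ is $\alpha$, which equals $\chi(v_g)$ by part~(3).
\item If $A[g,h]=0$, then $x_g=v_g$ with $\chi(v_g)=\alpha$. So $\chi(N(w_h))=\{\chi(u_t):t\neq g\}\cup\{\alpha\}$, which again has $k-1$ distinct colors. This forces $\chi(w_h)=\chi(u_g)$. Since $\chi(u_g)$ is one of the $u$-colors, it is different from $\alpha$.
\end{itemize}

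Combining the two cases gives both directions of the equivalence: $\chi(w_h)=\alpha$ holds exactly when $A[g,h]=1$, and $\alpha=\chi(v_g)$ by Lemma~\ref{lem:structure}(3).

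The only delicate step is the degree bookkeeping: confirming that no gadget edge ever lands on $w_h$, so that the requirement at $w_h$ really is full properness. The rest is a direct case check.
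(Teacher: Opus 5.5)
Your proof is correct and follows essentially the same route as the paper. In both, $\deg_G(w_h)=k-1$ forces properness at $w_h$, and a case split on $A[g,h]$ with Lemma~\ref{lem:structure} identifies the unique color missing from $N(w_h)$. Your explicit check that no gadget touches $W$ usefully fills in what the paper only asserts ``by construction.''
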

\begin{proof}
By Lemma~\ref{lem:structure}, the vertices $u_1,\dots,u_{k-1}$ receive pairwise
distinct colors, and $\chi(v_g)=\alpha$, where $\alpha$ is the unique color in
$[k]$ not appearing among $\{\chi(u_1),\dots,\chi(u_{k-1})\}$. In particular,
$\alpha\neq \chi(u_g)$.

By construction, $\deg(w_h)=k-1\le k$, hence $\chi$ is proper on all edges
incident to $w_h$. For each $i\neq g$, the vertex $w_h$ is adjacent to exactly one
of $u_i$ or $v_i$, and Lemma~\ref{lem:structure}(2) gives $\chi(u_i)=\chi(v_i)$.
Therefore the neighbors of $w_h$ contributed by rows $i\neq g$ realize exactly the
$k-2$ distinct colors $\{\chi(u_i): i\neq g\}$.

If $A[g,h]=1$, then $\{u_g,w_h\}\in E$, so the neighbors of $w_h$ realize all
colors in $\{\chi(u_1),\dots,\chi(u_{k-1})\}$. Hence the unique color in $[k]$
absent from the neighborhood of $w_h$ is $\alpha$, and therefore
$\chi(w_h)=\alpha=\chi(v_g)$.

If $A[g,h]=0$, then $\{v_g,w_h\}\in E$, so the neighbors of $w_h$ realize the set
$\{\chi(u_i): i\neq g\}\cup\{\alpha\}$. The unique missing color is then
$\chi(u_g)$, and hence $\chi(w_h)=\chi(u_g)\neq \chi(v_g)$.

Combining the two cases yields
$A[g,h]=1 \Longleftrightarrow \chi(w_h)=\chi(v_g)$.
\end{proof}

\begin{proof}[Proof of Theorem~\ref{thm:lower_bound}]
Assume there exists a one-pass randomized streaming algorithm $\mathcal A$ that,
under the promise that the input graph is $k$-partial $k$-colorable, outputs such
a coloring using $S$ bits of memory with probability at least $2/3$.

Alice simulates $\mathcal A$ on her portion of the stream and sends the memory
state to Bob. Bob continues the simulation on his portion and obtains a coloring
$\chi$.

By Lemma~\ref{lem:forced-color}, Bob can recover the bit $A[g,h]$ by checking
whether $\chi(w_h)=\chi(v_g)$ or $\chi(w_h)=\chi(u_g)$. Hence this yields a
one-way randomized protocol for \textsc{Index} on $N=(k-1)\ell$ bits using
$S$ bits of communication and success probability at least $2/3$.
Therefore $S=\Omega(N)$.

Let $\ell=\Theta(k^{3+\varepsilon})$. Then
\[
n = |U|+|V|+|W|+O(k^3)=\Theta(k^{3+\varepsilon}),
\qquad
N=(k-1)\ell=\Theta(k^{4+\varepsilon}).
\]
Thus $k=\Theta(n^{1/(3+\varepsilon)})$ and
\[
S=\Omega(N)=\Omega\!\left(n^{1+\frac{1}{3+\varepsilon}}\right).
\]
For $\varepsilon=0$, this gives the bound $\Omega(n^{4/3})$.

\end{proof}

\section*{Acknowledgements}
Avinandan would like to thank Pierre Fraigniaud and Adi Ros\'en for proposing the problem of $k$-partial $(k+1)$-coloring and many  discussions. This work was supported in part by the Research Council of Finland, Grants 363558 and 359104.

\bibliographystyle{abbrv}
\bibliography{references}

@book{MU17,
  author    = {Mitzenmacher, Michael and Upfal, Eli},
  title     = {Probability and Computing: Randomized Algorithms and Probabilistic Analysis},
  edition   = {2},
  publisher = {Cambridge University Press},
  year      = {2017}
}

@inproceedings{BCG20,
  author       = {Suman K. Bera and
                  Amit Chakrabarti and
                  Prantar Ghosh},
  editor       = {Artur Czumaj and
                  Anuj Dawar and
                  Emanuela Merelli},
  title        = {Graph Coloring via Degeneracy in Streaming and Other Space-Conscious
                  Models},
  booktitle    = {47th International Colloquium on Automata, Languages, and Programming,
                  {ICALP} 2020, Saarbr{\"{u}}cken, Germany (Virtual Conference),
                  July 8-11, 2020},
  series       = {LIPIcs},
  volume       = {168},
  pages        = {11:1--11:21},
  publisher    = {Schloss Dagstuhl - Leibniz-Zentrum f{\"{u}}r Informatik},
  year         = {2020},
  url          = {https://doi.org/10.4230/LIPIcs.ICALP.2020.11},
  doi          = {10.4230/LIPICS.ICALP.2020.11},
  bibsource    = {dblp computer science bibliography, https://dblp.org}
}

@inproceedings{HKNT22,
  author       = {Magn{\'{u}}s M. Halld{\'{o}}rsson and
                  Fabian Kuhn and
                  Alexandre Nolin and
                  Tigran Tonoyan},
  editor       = {Stefano Leonardi and
                  Anupam Gupta},
  title        = {Near-optimal distributed degree+1 coloring},
  booktitle    = {{STOC} '22: 54th Annual {ACM} {SIGACT} Symposium on Theory of Computing,
                  Rome, Italy, June 20 - 24, 2022},
  pages        = {450--463},
  publisher    = {{ACM}},
  year         = {2022},
  url          = {https://doi.org/10.1145/3519935.3520023},
  doi          = {10.1145/3519935.3520023},
  bibsource    = {dblp computer science bibliography, https://dblp.org}
}

@article{BDGJ25,
  author       = {Jan Bok and
                  Avinandan Das and
                  Anna Gujgiczer and
                  Nikola Jedlickov{\'{a}}},
  title        = {Generalizing Brooks' theorem via Partial Coloring is Hard Classically
                  and Locally},
  journal      = {CoRR},
  volume       = {abs/2508.16308},
  year         = {2025},
  url          = {https://doi.org/10.48550/arXiv.2508.16308},
  doi          = {10.48550/ARXIV.2508.16308},
  eprinttype    = {arXiv},
  eprint       = {2508.16308},
  bibsource    = {dblp computer science bibliography, https://dblp.org}
}

@article{JKS08,
  author       = {T. S. Jayram and
                  Ravi Kumar and
                  D. Sivakumar},
  title        = {The One-Way Communication Complexity of Hamming Distance},
  journal      = {Theory Comput.},
  volume       = {4},
  number       = {1},
  pages        = {129--135},
  year         = {2008},
  url          = {https://doi.org/10.4086/toc.2008.v004a006},
  doi          = {10.4086/TOC.2008.V004A006},
  bibsource    = {dblp computer science bibliography, https://dblp.org}
}

@inproceedings{DFR23,
  author       = {Avinandan Das and
                  Pierre Fraigniaud and
                  Adi Ros{\'{e}}n},
  editor       = {Alysson Bessani and
                  Xavier D{\'{e}}fago and
                  Junya Nakamura and
                  Koichi Wada and
                  Yukiko Yamauchi},
  title        = {Distributed Partial Coloring via Gradual Rounding},
  booktitle    = {27th International Conference on Principles of Distributed Systems,
                  {OPODIS} 2023, Tokyo, Japan, December 6-8, 2023},
  series       = {LIPIcs},
  volume       = {286},
  pages        = {30:1--30:22},
  publisher    = {Schloss Dagstuhl - Leibniz-Zentrum f{\"{u}}r Informatik},
  year         = {2023},
  url          = {https://doi.org/10.4230/LIPIcs.OPODIS.2023.30},
  doi          = {10.4230/LIPICS.OPODIS.2023.30},
  bibsource    = {dblp computer science bibliography, https://dblp.org}
}

@inproceedings{ACK19,
  author       = {Sepehr Assadi and
                  Yu Chen and
                  Sanjeev Khanna},
  editor       = {Timothy M. Chan},
  title        = {Sublinear Algorithms for ($\Delta$ + 1) Vertex Coloring},
  booktitle    = {Proceedings of the Thirtieth Annual {ACM-SIAM} Symposium on Discrete
                  Algorithms, {SODA} 2019, San Diego, California, USA, January 6-9,
                  2019},
  pages        = {767--786},
  publisher    = {{SIAM}},
  year         = {2019},
  url          = {https://doi.org/10.1137/1.9781611975482.48},
  doi          = {10.1137/1.9781611975482.48},
  bibsource    = {dblp computer science bibliography, https://dblp.org}
}

@article{AKM23,
  author       = {Sepehr Assadi and
                  Pankaj Kumar and
                  Parth Mittal},
  title        = {Brooks' Theorem in Graph Streams: {A} Single-Pass Semi-Streaming Algorithm
                  for $\Delta$-Coloring},
  journal      = {TheoretiCS},
  volume       = {2},
  year         = {2023},
  url          = {https://doi.org/10.46298/theoretics.23.9},
  doi          = {10.46298/THEORETICS.23.9},
  bibsource    = {dblp computer science bibliography, https://dblp.org}
}

@article{CR15,
  author       = {Daniel W. Cranston and
                  Landon Rabern},
  title        = {Brooks' Theorem and Beyond},
  journal      = {J. Graph Theory},
  volume       = {80},
  number       = {3},
  pages        = {199--225},
  year         = {2015},
  url          = {https://doi.org/10.1002/jgt.21847},
  doi          = {10.1002/JGT.21847},
  bibsource    = {dblp computer science bibliography, https://dblp.org}
}

@article{H81,
  author       = {Ian Holyer},
  title        = {The NP-Completeness of Edge-Coloring},
  journal      = {{SIAM} J. Comput.},
  volume       = {10},
  number       = {4},
  pages        = {718--720},
  year         = {1981},
  url          = {https://doi.org/10.1137/0210055},
  doi          = {10.1137/0210055},
  bibsource    = {dblp computer science bibliography, https://dblp.org}
}

@article{FKMSZ05,
  author       = {Joan Feigenbaum and
                  Sampath Kannan and
                  Andrew McGregor and
                  Siddharth Suri and
                  Jian Zhang},
  title        = {On graph problems in a semi-streaming model},
  journal      = {Theor. Comput. Sci.},
  volume       = {348},
  number       = {2-3},
  pages        = {207--216},
  year         = {2005},
  url          = {https://doi.org/10.1016/j.tcs.2005.09.013},
  doi          = {10.1016/J.TCS.2005.09.013},
  bibsource    = {dblp computer science bibliography, https://dblp.org}
}

@article{M14,
  author       = {Andrew McGregor},
  title        = {Graph stream algorithms: a survey},
  journal      = {{SIGMOD} Rec.},
  volume       = {43},
  number       = {1},
  pages        = {9--20},
  year         = {2014},
  url          = {https://doi.org/10.1145/2627692.2627694},
  doi          = {10.1145/2627692.2627694},
  bibsource    = {dblp computer science bibliography, https://dblp.org}
}

@inproceedings{AA20,
  author       = {Noga Alon and
                  Sepehr Assadi},
  editor       = {Jaroslaw Byrka and
                  Raghu Meka},
  title        = {Palette Sparsification Beyond $(\Delta+1)$ Vertex Coloring},
  booktitle    = {Approximation, Randomization, and Combinatorial Optimization. Algorithms
                  and Techniques, {APPROX/RANDOM} 2020, Virtual Conference, August 17-19,
                  2020},
  series       = {LIPIcs},
  volume       = {176},
  pages        = {6:1--6:22},
  publisher    = {Schloss Dagstuhl - Leibniz-Zentrum f{\"{u}}r Informatik},
  year         = {2020},
  url          = {https://doi.org/10.4230/LIPIcs.APPROX/RANDOM.2020.6},
  doi          = {10.4230/LIPICS.APPROX/RANDOM.2020.6},
  bibsource    = {dblp computer science bibliography, https://dblp.org}
}

@inproceedings{AY25,
  author       = {Sepehr Assadi and
                  Helia Yazdanyar},
  editor       = {Ioana Oriana Bercea and
                  Rasmus Pagh},
  title        = {Simple Sublinear Algorithms for $(\Delta + 1)$ Vertex Coloring
                  via Asymmetric Palette Sparsification},
  booktitle    = {2025 Symposium on Simplicity in Algorithms, {SOSA} 2025, New Orleans,
                  LA, USA, January 13-15, 2025},
  pages        = {1--8},
  publisher    = {{SIAM}},
  year         = {2025},
  url          = {https://doi.org/10.1137/1.9781611978315.1},
  doi          = {10.1137/1.9781611978315.1},
  bibsource    = {dblp computer science bibliography, https://dblp.org}
}

@inproceedings{NS93,
  author       = {Moni Naor and
                  Larry J. Stockmeyer},
  editor       = {S. Rao Kosaraju and
                  David S. Johnson and
                  Alok Aggarwal},
  title        = {What can be computed locally?},
  booktitle    = {Proceedings of the Twenty-Fifth Annual {ACM} Symposium on Theory of
                  Computing, May 16-18, 1993, San Diego, CA, {USA}},
  pages        = {184--193},
  publisher    = {{ACM}},
  year         = {1993},
  url          = {https://doi.org/10.1145/167088.167149},
  doi          = {10.1145/167088.167149},
  bibsource    = {dblp computer science bibliography, https://dblp.org}
}

@inproceedings{BHLOS19,
  author       = {Alkida Balliu and
                  Juho Hirvonen and
                  Christoph Lenzen and
                  Dennis Olivetti and
                  Jukka Suomela},
  editor       = {Keren Censor{-}Hillel and
                  Michele Flammini},
  title        = {Locality of Not-so-Weak Coloring},
  booktitle    = {Structural Information and Communication Complexity - 26th International
                  Colloquium, {SIROCCO} 2019, L'Aquila, Italy, July 1-4, 2019, Proceedings},
  series       = {Lecture Notes in Computer Science},
  volume       = {11639},
  pages        = {37--51},
  publisher    = {Springer},
  year         = {2019},
  url          = {https://doi.org/10.1007/978-3-030-24922-9\_3},
  doi          = {10.1007/978-3-030-24922-9\_3},
  bibsource    = {dblp computer science bibliography, https://dblp.org}
}

\end{document}